\definecolor{mylinkcolor}{RGB}{0,0,140}
\newcommand{\tem}{\tilde{E}^-}
\newcommand{\tep}{\tilde{E}^+}
\newcommand{\teg}{\tilde{G}}
\newcommand{\cut}{\textbf{cut}}
\newcommand{\vol}{\textbf{vol}}
\DeclareMathOperator*{\minimize}{minimize}
\DeclareMathOperator{\argmax}{argmax}
\providecommand{\subjectto}{\ensuremath{\text{subject to}}}
\newcommand{\tony}[1]{}
\renewcommand{\tony}[1]{{\color{ForestGreen}{\bf{Tony says:}} \emph{#1}}}
\title{Parameterized Correlation Clustering \\in Hypergraphs and Bipartite Graphs}
\begin{abstract}
Motivated by applications in community detection and dense subgraph discovery, we consider new clustering objectives in
hypergraphs and bipartite graphs. These objectives are parameterized by one or more \emph{resolution
	parameters} in order to enable diverse knowledge discovery in complex data.

For both hypergraph and bipartite objectives, we identify relevant parameter regimes that are equivalent to
existing objectives and share their (polynomial-time) approximation algorithms. We first show that our
parameterized hypergraph correlation clustering objective is related to higher-order notions of normalized cut and
modularity in hypergraphs. It is further amenable to approximation algorithms via hyperedge expansion techniques.

Our parameterized bipartite correlation clustering objective generalizes standard unweighted bipartite correlation
clustering, as well as the bicluster deletion problem. For a certain choice of parameters it is also related to our
hypergraph objective. Although in general it is NP-hard, we highlight a parameter regime for the bipartite
objective where the problem reduces to the bipartite matching problem and thus can be solved in polynomial time. For
other parameter settings, we present several approximation algorithms using linear program rounding techniques.
These results allow us to introduce the first constant-factor approximation for bicluster deletion, the task of removing a minimum number of edges to partition a bipartite graph into disjoint bi-cliques.

In several experimental results, we highlight the flexibility of our framework and the diversity of results that can be obtained in different parameter settings. This includes clustering bipartite graphs across a range of parameters, detecting motif-rich clusters in an email network and a food web, and forming clusters of retail products in a product review hypergraph, that are highly correlated with known product categories.
\end{abstract}
\author{Nate Veldt}
\affiliation{
  \institution{Cornell University}
  \department{Center for Applied Mathematics}
}
\email{nveldt@cornell.edu}
\author{Anthony Wirth}
\affiliation{%
	\institution{The University of Melbourne}
	\department{Computing and Information Systems}
}
\email{awirth@unimelb.edu.au}
\author{David F. Gleich}
\affiliation{%
  \institution{Purdue University}
 \department{Department of Computer Science}
}
\email{dgleich@purdue.edu}
\begin{document}

\maketitle

\section{Introduction}
Finding sets of related objects in a large dataset, i.e., \emph{clustering}, is one of the fundamental tasks in data mining and machine learning, and is often used as a first step in exploring and understanding a new dataset. When the data to be clustered is represented by a graph or network, the task is referred to as graph clustering or community detection~\cite{Fortunato36, schaeffer2007graphclustering}. A good graph clustering is one in which nodes in the same cluster share many edges with each other, but nodes in different clusters share few edges. While these basic principles are shared by nearly all graph clustering techniques, there are many ways to formalize the notion of a graph cluster~\cite{Fortunato36,schaeffer2007graphclustering,Veldt:2018:CCF:3178876.3186110}. However, no one method or objective function is capable of solving all graph clustering tasks~\cite{Peel2017ground}. 

One outcome is that there are many graph clustering objectives that rely on one or more tunable \emph{resolution}
parameters, which can control the size, structure, or edge density of the clusters formed by optimizing the
objective~\cite{Veldt:2018:CCF:3178876.3186110,Veldt2019learning,Arenas2008analysis,Delvenne2010stabilitypnas,ReichardtBornholdt2004,traag2011narrow}.
In addition to providing a way to detect clusters at different resolutions in a graph, parametric clustering objectives
often make it possible to interpolate between other existing and commonly studied graph clustering objectives. Recently, we showed~\cite{Veldt:2018:CCF:3178876.3186110} that a number of popular graph clustering objectives such as modularity~\cite{newman2004modularity}, normalized cut~\cite{ShiMalik2000}, and cluster deletion~\cite{ShamirSharanTsur2004} can be captured as special cases of a parametric variant of correlation clustering~\cite{BansalBlumChawla2004}.

Nearly all existing techniques for parametric graph clustering focus on a simple graph setting, where all nodes are of the same type and are inter-related by pairwise connections, represented
by edges. However, graph and complex network datasets often have additional structure, which can be exploited for the purpose of more in-depth data analysis. As an example, there has been a recent surge of interest in \emph{higher-order} methods for clustering~\cite{amburg2019hypergraph,BensonGleichLeskovec2016,yin2017local,panli2017inhomogeneous,panli_submodular,Zhou2006learning,hao2018higher,Tsourakakis:2017:SMG:3038912.3052653,veldt2020hypergraph}.
These determine the clustering of the data not only via its graph edges, but also based on its motifs (small, frequently
appearing subgraphs), or indeed based on hyperedges in a hypergraph.
Motifs and hyperedges admit encoding multiway relationships between sets of three or more nodes. This provides a more faithful way to represent complex systems characterized by interactions that are inherently multiway. For example, in co-authorship datasets, papers are frequently written by more than two authors.
Applications of higher-order and hypergraph clustering include image segmentation and computer vision
problems~\cite{Agarwal2005beyond, kim2011highcc}, circuit design and VLSI
layout~\cite{hadley1992efficient,Karypis:1999:MKW:309847.309954}, and bioinformatics~\cite{michoel2012molecular,tian2009gene}.

Bipartite graphs model interactions between two different types of objects.
These have a close relationship with hypergraphs in general, as witnessed
in the example of co-authorship data.
%Any dataset that can be represented by a hypergraph also can be
%interpreted as a bipartite graph. For the simple example of co-authorship data,
In a hypergraph, each author is a node and the set of authors in each paper
is represented by a hyperedge.
%represents the set of authors nodes can represent authors and papers
%can represent hyperedges corresponding to a paper's author list. However, this setting could also be modeled as a
%bipartite graph.
In a bipartite graph, one set of nodes represents authors, the other set papers:
nodes~$i$ and~$j$ are adjacent whenever person~$i$ is an author of paper~$j$.
%Depending on the desired task, representing the data as a hypergraph, or as a bipartite graph may be more appropriate.
Which representation is best depends on the task; 
importantly, either of these is more informative than a simple network
in which each edge indicates whether that pair of authors have ever co-authored.
% provides more information than a simple network in which edges report which pairs of
%authors have collaborated at least once.

Just as there are many objective functions for graph clustering, many different objectives for clustering hypergraphs and bipartite graphs have been developed, each of which strikes different balances in terms of size and structure of output clusters~\cite{panli2017inhomogeneous,amit2004bicluster,fukunga2018highcc,kim2011highcc,Li2017motifcc,Veldt2018ccgen,Li2019motif,Ailon2011bcc,AsterisKryillidisPapailiopoulosEtAl2016,Zhou2006learning,kaminski2019clustering,ChawlaMakarychevSchrammEtAl2015}. The prevalence and variety of different methods indicates that hypergraphs and bipartite graphs can also exhibit clustering structure at different resolutions. However, these existing methods for clustering hypergraphs and bipartite graphs largely ignore parametric clustering objectives. Thus, in this paper we present a rigorous framework for parametric clustering in these settings. Our objectives are based on parameterized versions of correlation clustering and we show how, in certain parameter regimes, our objectives are related to a number of these previous objectives for bipartite and hypergraph clustering. Furthermore, our methods come with new approximation results. In summary, 
\begin{enumerate}
	\item We present \textsc{HyperLam}, a parametric hypergraph clustering objective that we prove is related to hypergraph generalizations of the normalized cut and modularity objectives.
	\item We present a parametric bipartite correlation clustering objective (PBCC), which captures standard bipartite correlation clustering and bicluster deletion~\cite{amit2004bicluster} as special cases. We also prove that in certain parameter regimes it is equivalent to a variant of our \textsc{HyperLam} objective. 
	\item We prove that \textsc{HyperLam} admits an~$O(\log n)$ approximation by combining certain \emph{expansion} techniques with approximation algorithms for correlation clustering in graphs. We also consider faster heuristic approaches based on applying greedy agglomeration methods.
	\item While PBCC is NP-hard in general, we prove that in a certain parameter regime it is equivalent to bipartite matching and can thus be solved in polynomial time. 
	\item Via linear programming relaxation techniques, we show a number of approximation algorithm that apply to different parameter settings of PBCC, including the first constant factor approximation for bicluster deletion, the problem of partitioning a bipartite graph into disjoint bicliques by removing a minimum number of edges.
\end{enumerate}
As a brief overview of our paper, we begin with small technical preliminaries on correlation clustering, graph
clustering, and hypergraph clustering. Then we state our two new objectives for parametric hypergraph and bipartite
clustering in Sections~\ref{sec:hyperlam}~and~\ref{sec:pbcc}, and prove their equivalence with existing objectives. We discuss algorithms and heuristics in Section~\ref{sec:algs} before showing how these algorithms work in a variety of scenarios (Section~\ref{sec:experiments}).

\section{Preliminaries}
We begin with technical preliminaries on correlation clustering, graph clustering, and hypergraph clustering.

\subsection{Correlation Clustering}
A standard weighted instance of correlation clustering is given by a graph $G = (V, W^+, W^-)$, where each pair of nodes $(i,j) \in V \times V$ with $i \neq j$ is associated with positive and negative weights $w_{ij}^+ \in W^+$ and $w_{ij}^- \in W^-$. Given this input, the objective is to minimize the weight of \emph{mistakes} or \emph{disagreements}. If nodes $i$ and $j$ are clustered together, they incur a mistake with penalty $w_{ij}^-$, and if they are separated, they incur a mistake with penalty $w_{ij}^+$. For instances where at most one of $(w_{ij}^+, w_{ij}^-)$ is non-zero, this can be viewed as a clustering problem in a signed graph. The objective can formally be stated as a binary linear program (BLP):
\begin{equation}
\label{eq:gencc}
\begin{array}{lll} \text{minimize} & \sum_{i < j} w_{ij}^+ x_{ij}^{}  + w_{ij}^- (1-x_{ij}^{})\\  \subjectto
& x_{ij}^{} \leq x_{ik}^{} + x_{jk}^{} \hspace{.5cm} \text{ for all $i,j,k$} \\
& x_{ij}^{} \in \{0,1\} \hspace{1.1cm}  \text{ for all $i < j$}.
\end{array}
\end{equation}
The objective was first presented for signed graphs,
by Bansal et al.~\cite{BansalBlumChawla2004} and by Shamir et al.~\cite{ShamirSharanTsur2004}.
Since its introduction, numerous variations on the objective have been presented for different weighted cases and graph types~\cite{CharikarGuruswamiWirth2005,DemaineEmanuelFiatEtAl2006,Puleo2018correlation,Ailon2011bcc,AilonCharikarNewman2008,Veldt:2018:CCF:3178876.3186110,Li2017motifcc}.
% We list three that are particularly relevant for our results.
% \textbf{Bipartite Correlation Clustering.}
In \emph{bipartite correlation
clustering}~\cite{amit2004bicluster,Ailon2011bcc,ChawlaMakarychevSchrammEtAl2015,AsterisKryillidisPapailiopoulosEtAl2016},
nodes can be organized into two different sets, in such a way that $w_{ij}^+ = w_{ij}^- = 0$ for any pair of nodes~$i$ and~$j$ in the same set. In the complete, unweighted bipartite signed graph case,
the best approximation factor proven
is~3~\cite{ChawlaMakarychevSchrammEtAl2015}.

\subsection{Graph Clustering}
Graph clustering is the task of separating the nodes of a graph into clusters in such a way that nodes inside a cluster
share many edges with each other, but few with the rest of the graph. For an overview of graph clustering and community
detection, we refer to surveys by Fortunato and Hric~\cite{Fortunato36}, and
Schaeffer~\cite{schaeffer2007graphclustering}. Given a graph $G = (V,E)$, we let $\mathcal{C} = \{S_1, S_2, \hdots ,
S_k\}$ represent a disjoint clustering of~$V$, with $S_i \cap S_j = \emptyset$ for $i \neq j$, and $\bigcup_i S_i = V$.
Given a set of nodes~$S \subseteq V$, let $\overline{S} = V\backslash S$ denote the complement set, and~$\cut(S)$ be the
weight of edges between~$S$ and~$\overline{S}$. One of the most common approaches to graph clustering is to set up and solve (or approximate) a combinatorial objective function that encodes some notion of clustering structure. One common objective used for bipartitioning a graph is the normalized cut objective, defined for a set $S \subseteq V$ to be
\begin{equation}
\label{eq:ncut}
{\phi(S) = \frac{\cut(S)}{\vol(S)} + \frac{\cut(S)}{\vol(\overline{S})}} \,,
\end{equation}
where $\vol(S) = \sum_{i \in S} d_i$, with $d_i$ being the degree of node $i$. Another very popular approach is to maximize the modularity objective~\cite{newman2004modularity}, which measures the difference between the number of edges inside a cluster, and the expected number of edges in the cluster, where expectation is defined by some underlying graph null model. 

\paragraph{Flexible parametric frameworks for graph clustering}  Recently, we introduced a framework for graph
clustering based on correlation clustering called LambdaCC~\cite{Veldt:2018:CCF:3178876.3186110}. Given a graph $G =
(V,E)$, the LambdaCC framework replaces an edge $(i,j) \in E$ with a positive edge of weight $1-\lambda d_i d_j$. For
every pair $(i,j) \notin E$, a negative edge of weight $\lambda d_i d_j$ is introduced. The resulting signed graph can then be partitioned with respect to the correlation clustering objective. LambdaCC generalizes several other objectives including normalized cut~\cite{ShiMalik2000}, modularity~\cite{newman2004modularity}, and cluster deletion~\cite{ShamirSharanTsur2004}. 

\subsection{Hypergraph clustering}
We let $\mathcal{H} = (V,\mathcal{E})$ denote a hypergraph, where~$V$ is a set of nodes, and $\mathcal{E}$ is a set of \emph{hyperedges}, which involve two or more nodes. In hypergraphs, the notion of cuts and clustering becomes even more complex, as there can be numerous ways to partition the nodes of a hyperedge, and numerous ways to generalize a graph-based objective. We say that a hyperedge $e \in \mathcal{E}$ is \emph{cut} if it spans
at least two clusters of a clustering,~$\mathcal{C}$. In many clustering applications, any way of separating the nodes of a hyperedge is associated with a penalty equal to the weight of the hyperedge, though other more general notions of hyperedge cuts have also been considered~\cite{gong_km1,Catalyurek99hypergraph,panli2017inhomogeneous,panli_submodular}. Given a set of nodes $S \subseteq V$ in a hypergraph $\mathcal{H}$, we let $\partial S = \{e \in \mathcal{E} : S \cap e \neq \varnothing, \bar{S} \cap e \neq \varnothing \}$ denote the boundary of $S$, and use $\cut_\mathcal{H}(S)$ to denote the hypergraph cut penalty for $S$. The most basic type of cut penalty is to simply count the number of edges on the boundary: $\cut_\mathcal{H}(S) = |\partial S|$. In this paper we also will consider the \emph{linear} cut penalty, defined as follows:
\begin{equation}
\label{eq:2lin}
{ \cut_\mathcal{H}(S) = \sum_{e \in \mathcal{E}} \min \{|S \cap e|, |\bar{S} \cap e| \} }\,.
\end{equation}
Hypergraph generalizations of the normalized cut objective have also been introduced in practice~\cite{panli_submodular,Zhou2006learning,panli2017inhomogeneous}. Here we consider the following definition, first introduced for generalized hypergraph cut functions by Li et al.~\cite{panli2017inhomogeneous}:
\begin{equation}
\label{eq:hncut}
\phi_\mathcal{H}(S) = \frac{\cut_\mathcal{H}(S)}{\vol_\mathcal{H}(S)} +
\frac{\cut_\mathcal{H}(S)}{\vol_\mathcal{H}(\overline{S})}\,,
\end{equation}
where $\cut_\mathcal{H}$ is any hypergraph cut function (e.g., $|\partial S$| or~\eqref{eq:2lin}), and $\vol_\mathcal{H}(S) = \sum_{s \in S} d_s$ is the hypergraph volume of $S$. In this paper we will always consider the hypergraph degree $d_s$ of a node to be the number of hyperedges a node participates in, though other definitions are possible~\cite{panli2017inhomogeneous,panli_submodular}. We also note that hypergraph generalizations of the modularity objective have been considered in different contexts~\cite{kaminski2019clustering,kumar2020new}.

\section{Parametric Hypergraph Clustering} \label{sec:hyperlam}
Our first contribution is a hypergraph clustering objective that differentially treats hyperedges and pairwise edges in a parametric fashion. We further develop equivalence results with existing fixed-parameter objectives;
algorithms are discussed in Section~\ref{sec:algs}.
Given a hypergraph $\mathcal{H} = (V,\mathcal{E})$ and a resolution parameter~$\lambda \in (0,1)$,
we introduce a negative edge between each pair of nodes $(i,j) \in V \times V$, with weight $\lambda w_i w_j$,
where~$w_i$ is a weight associated with node~$i$.
We consider either unit node weights ($w_i = 1$ for all nodes), or degree-based weights:
$w_i = d_i$ for each~$i \in V$.
We treat each original hyperedge in~$\mathcal{H}$ as a positive edge of weight~$1$. In order to accommodate a broad range of possible hyperedge cut penalties, we use the following general abstraction:
let~$P_V$ be the family of all
clusterings, and define $\zeta: \mathcal{E} \times P_V \rightarrow \mathbb{R}$
to be a function that outputs a penalty for the way in which clustering $\mathcal{C} \in P_V$ separates the nodes of a hyperedge $e \in \mathcal{E}$.  The \textsc{HyperLam} objective for a clustering $\mathcal{C}$ of $\mathcal{H}$ is then:
\begin{equation}
\label{eq:hocc}
{ \textsc{HyperLam}(\mathcal{C},\lambda) =  \sum_{e \in \mathcal{E}} \zeta(e, \mathcal{C}) + \sum_{i < j}
	\lambda w_i w_j (1- z_{ij})\,.}
\end{equation}
where 
$z_{ij}$ is a binary indicator for whether nodes $i$ and $j$ are separated ($z_{ij} = 1$) or clustered together $(z_{ij} = 0$) in $\mathcal{C}$. This objective is inspired by the parametric LambdaCC objective for graphs~\cite{Veldt:2018:CCF:3178876.3186110}.

In practice, there may be many meaningful cut functions~$\zeta$ to consider---here we focus mostly on two.
The first is the standard \emph{all-or-nothing} penalty, typically considered in the higher-order correlation clustering literature, which assigns a
penalty proportional to the weight of the hyperedge if and only if
the hyperedge is cut (at least two of its nodes are separated).
Formally, this is defined as
\begin{equation}
\label{eq:allornothing}
{\textstyle \zeta(e,\mathcal{C}) = \begin{cases}
	0 & \text{ if $e \subseteq S$ for some $S \in \mathcal{C}$}\,,\\
	1 & \text{ otherwise.}
	\end{cases}}
\end{equation}
When this standard cut penalty is applied, objective~\eqref{eq:hocc} can be viewed as an instance of higher-order correlation clustering~\cite{kim2011highcc,Veldt2018ccgen,Li2017motifcc,fukunga2018highcc}
with a very special type of negative hyperedge set. Namely, there are no negative hyperedges of size three or more, but \emph{every} pair of nodes defines a negative hyperedge of size two (i.e., a negative edge). The other cut function we consider is a multiway generalization of the linear hypergraph cut penalty~\eqref{eq:2lin}, defined by
\begin{equation}
\label{eq:linear}
\zeta(e,\mathcal{C}) = |e| - \max_{S \in \mathcal{C}} |e \cap S|\,.
\end{equation}
Given a clustering~$\mathcal{C}$, this function assigns a penalty equal to the minimum number of nodes of a hyperedge~$e$ that must be moved in order
for~$e$ to be contained in a single cluster. This has the advantage that for large hyperedges, it assigns a smaller penalty if only a small subset of nodes from a hyperedges are separated from the others. 

\subsection{\textsc{HyperLam} with Linear Cuts}
\label{sec:linear}
%Before moving on we present a useful alternative characterization of \textsc{HyperLam} under the linear penalty~\eqref{eq:linear} which will be useful later. Specifically, 
If we use the linear hypergraph cut penalty~\eqref{eq:linear}, the \textsc{HyperLam} objective is equivalent to a clustering problem in a bipartite graph obtained by applying a so-called \emph{star expansion}~\cite{zien1999} to $\mathcal{H} = (V,\mathcal{E})$. In more detail, for each $e \in \mathcal{E}$ we can introduce a new node $v_e$, and attach each $v \in e$ to $v_e$ with a unit weight edge. %In other words, each hyperedge is replaced with a small star graph. 
Let $V_\mathcal{E}$ be the set of new hyperedge-nodes introduced via this procedure. This defines a graph $G_\mathcal{H} = (\tilde{V}, E)$, where  $\tilde{V} = V \cup V_\mathcal{E}$ is the set of new nodes, and $E$ is the set of edges between $V$ and $V_\mathcal{E}$. 
The goal is then to solve the following objective on $G_\mathcal{H}$:
\begin{equation}
\label{hyperlamlinear}
\minimize \sum_{(i, v_e) \in E} z_{i,v_e} + \lambda\sum_{\substack{i < j\\ i,j \in V}}  w_i w_j (1-z_{ij})\,,\\
\end{equation}
where $z_{i, v_e} = 0$ if node $i\in S$ is clustered with $v_e \in V_\mathcal{E}$, but is one otherwise, and $z_{ij}$ is similarly defined for nodes in $V$. This objective is equivalent to introducing a negative edge of weight $\lambda w_i w_j$ between each pair of nodes in $V$, and optimizing the correlation clustering objective. 

\begin{lemma}
\label{lem:lin}
Objective~\eqref{hyperlamlinear} is equivalent to optimizing \textsc{HyperLam} with the linear cut penalty. 
\end{lemma}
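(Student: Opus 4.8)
The plan is to establish the equivalence through an explicit, value-preserving correspondence between clusterings of the original node set $V$ (as scored by \textsc{HyperLam} with the linear penalty) and clusterings of the expanded node set $\tilde{V} = V \cup V_\mathcal{E}$ (as scored by~\eqref{hyperlamlinear}). The key structural observation is that in the star expansion each auxiliary node $v_e$ is adjacent only to the nodes of its own hyperedge $e$, and there are neither positive edges nor negative $\lambda$-edges among the auxiliary nodes themselves. As a result, once the placement of the original nodes is fixed, the contribution of each $v_e$ to the objective decouples entirely from the other auxiliary nodes and from the $\lambda$-term, so each $v_e$ may be optimized independently.

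First I would observe that the second summation of~\eqref{hyperlamlinear}, taken over pairs $i,j \in V$, is identical to the $\lambda$-term of the \textsc{HyperLam} objective~\eqref{eq:hocc}; hence it suffices to reconcile the hyperedge terms. Next, fixing a clustering $\mathcal{C}$ of $V$ and supposing $v_e$ is assigned to a cluster $S$, its contribution is $\sum_{i \in e} z_{i,v_e} = |e| - |e \cap S|$, because $z_{i,v_e}=1$ precisely for those $i \in e$ lying outside $S$. Minimizing over $S$ yields $|e| - \max_{S \in \mathcal{C}} |e \cap S|$, which is exactly the linear cut penalty $\zeta(e,\mathcal{C})$ of~\eqref{eq:linear}; placing $v_e$ in a singleton or among only auxiliary nodes gives the maximal value $|e|$ and is therefore never advantageous.

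With this computation in place I would prove equality of the two optima in both directions. In the forward direction, given any $\mathcal{C}$ on $V$, I extend it to $\tilde{V}$ by assigning each $v_e$ to a cluster in $\argmax_{S \in \mathcal{C}} |e \cap S|$; the resulting value of~\eqref{hyperlamlinear} then equals $\textsc{HyperLam}(\mathcal{C},\lambda)$, so the optimum of~\eqref{hyperlamlinear} is at most that of \textsc{HyperLam}. In the reverse direction, given any clustering $\tilde{\mathcal{C}}$ of $\tilde{V}$, I restrict it to $V$ to obtain $\mathcal{C}$; since $e \subseteq V$, the contribution of each $v_e$ under $\tilde{\mathcal{C}}$ is $|e| - |e \cap S|$ for whichever cluster $S$ contains it, which is at least $\zeta(e,\mathcal{C})$, while the $\lambda$-terms coincide exactly. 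This gives the matching inequality, hence equality of the optima, together with an explicit map between optimal solutions.

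I do not expect a substantial obstacle. The only point that warrants care is the decoupling argument: one must confirm that no edge or $\lambda$-penalty links the auxiliary nodes, so that each $v_e$ can indeed be optimized in isolation and its optimal contribution collapses exactly onto the $\max$-based linear penalty. Everything else reduces to a direct substitution and the resulting two-sided inequality.
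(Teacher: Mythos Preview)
Your proposal is correct and follows essentially the same approach as the paper: both arguments hinge on the observation that, once the clustering of $V$ is fixed, each auxiliary node $v_e$ is optimally placed in the cluster $S$ maximizing $|e\cap S|$, which makes its contribution equal to the linear penalty $|e|-\max_{S\in\mathcal{C}}|e\cap S|$. Your write-up is somewhat more explicit than the paper's in spelling out both directions of the inequality, but the underlying idea is the same.
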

\begin{proof}
Given any fixed clustering $\mathcal{C} = \{{S}_1, \hdots , {S}_k\}$ of the node set $V$, we can define a clustering on all of $G_\mathcal{H}$, by clustering nodes in $V_\mathcal{E}$ in a way that leads to a minimum penalty subject to $\mathcal{C}$. This is accomplished by putting each $v_e \in V_\mathcal{E}$ in the cluster $S = \argmax_{S \in \mathcal{C}} |e \cap S|$. In other words, we put $v_e$ in the cluster where the largest number of nodes in $e$ have been placed, as this minimizes the number of edges adjacent to $v_e$ that are cut. This is the same as applying the linear hypergraph cut penalty~\eqref{eq:linear} to any way of separating nodes in a hyperedge $e \in \mathcal{E}$.
\end{proof}

\subsection{Relationship to Normalized Cut}

Given any hyperedge cut function~$\zeta$, the goal is to optimize~\eqref{eq:hocc} over all possible clusterings of nodes~$V$.
Our first theoretical
result is to show that our new objective captures a hypergraph generalization of normalized cut~\cite{panli2017inhomogeneous}, just as the
LambdaCC graph clustering framework generalizes normalized cut~\cite{Veldt:2018:CCF:3178876.3186110}.
With unit node weights ($w_i = 1$ for all~$i$), Theorem~\ref{thm:ncut} becomes
a statement about a hypergraph variant of the sparsest cut clustering objective. Many aspects of our proof mirror our previous results for the LambdaCC framework~\cite{Veldt:2018:CCF:3178876.3186110}. We have expanded these results to apply to the hypergraph setting. In particular, the second statement regarding the linear hyperedge cut requires significant extra treatment. 
\begin{theorem}
	\label{thm:ncut}
	For degree-weighted \textsc{HyperLam}, there exists some~$\lambda \in (0,1)$, such that optimizing~\eqref{eq:hocc} over biclusterings of the form $\mathcal{C} = \{S, \overline{S}\}$ for some $S \subseteq V$, will produce the minimum hypergraph normalized cut partition~\eqref{eq:hncut}. Furthermore, if the linear penalty~\eqref{eq:linear} is used and we optimize over an arbitrary number of clusters, there exists some $\lambda'$ such that~\eqref{eq:hocc} will be minimized by the minimum hypergraph normalized cut objective under the linear hypergraph cut function~\eqref{eq:2lin}.
\end{theorem}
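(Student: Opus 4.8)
The plan is to treat both statements through a single linear-fractional (Dinkelbach-style) argument, and then to handle the multiway case via a merging lemma that is special to the linear cut.

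For the first statement I would restrict to biclusterings $\mathcal{C} = \{S, \overline S\}$ and rewrite the objective. The cut term collapses to $\cut_\mathcal{H}(S)$ for whichever penalty $\zeta$ is used, and the degree-weighted separation term $\sum_{i<j}\lambda d_i d_j(1 - z_{ij})$ counts intra-cluster degree products. Using $\sum_{i<j,\,i,j\in S} d_i d_j = \frac12(\vol_\mathcal{H}(S)^2 - \sum_{i\in S}d_i^2)$ together with the analogous identity for $\overline S$, this term equals $\frac{\lambda}{2}(\vol_\mathcal{H}(S)^2 + \vol_\mathcal{H}(\overline S)^2)$ up to an additive constant independent of $S$. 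Since $\vol_\mathcal{H}(S) + \vol_\mathcal{H}(\overline S) = \vol_\mathcal{H}(V)$ is fixed, minimizing \textsc{HyperLam} over biclusterings is equivalent to minimizing $\cut_\mathcal{H}(S) - \lambda\,\vol_\mathcal{H}(S)\vol_\mathcal{H}(\overline S)$. I would then set $\lambda$ equal to the minimum cut ratio $\lambda^\star = \min_{S}\cut_\mathcal{H}(S)/(\vol_\mathcal{H}(S)\vol_\mathcal{H}(\overline S))$, which equals $\phi^\star/\vol_\mathcal{H}(V)$ for the minimum normalized cut value $\phi^\star$ of \eqref{eq:hncut}. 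At this $\lambda$, every nontrivial biclustering satisfies $\cut_\mathcal{H}(S) - \lambda^\star\vol_\mathcal{H}(S)\vol_\mathcal{H}(\overline S) \ge 0$ with equality exactly at minimizers of the normalized cut, so the \textsc{HyperLam} minimizer is the minimum normalized cut partition. Finally, $\cut_\mathcal{H}(S) \le \min(\vol_\mathcal{H}(S), \vol_\mathcal{H}(\overline S))$ for both cut functions, which forces $\lambda^\star \le 1$ and places $\lambda$ in the required range $(0,1)$ (excluding the degenerate disconnected case, where the normalized cut is zero).

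For the second statement I would first expand the separation term for an arbitrary clustering $\mathcal{C} = \{S_1,\dots,S_k\}$ as $\frac{\lambda}{2}\sum_r \vol_\mathcal{H}(S_r)^2$ plus a constant, so that, writing $\mathrm{SepVol}(\mathcal{C}) = \sum_{r<r'}\vol_\mathcal{H}(S_r)\vol_\mathcal{H}(S_{r'})$, minimizing \textsc{HyperLam} with the linear penalty is equivalent to minimizing $\zeta_{\mathrm{lin}}(\mathcal{C}) - \lambda\,\mathrm{SepVol}(\mathcal{C})$, where $\zeta_{\mathrm{lin}}(\mathcal{C}) = \sum_e(|e| - \max_r|e\cap S_r|)$ is the total linear-cut penalty \eqref{eq:linear}. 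Taking the same $\lambda' = \lambda^\star$ as above, the first statement already shows this quantity is nonnegative on biclusterings; the remaining task is to show $\zeta_{\mathrm{lin}}(\mathcal{C})/\mathrm{SepVol}(\mathcal{C}) \ge \lambda^\star$ for every clustering, i.e.\ that the minimum of this ratio over all clusterings is attained by a bipartition. Granting that, the minimum normalized cut partition stays a global minimizer even when an arbitrary number of clusters is allowed.

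The crux, and the step I expect to be the main obstacle, is this reduction from multiway clusterings to bipartitions, which is where the linear cut is essential. I would prove it by a merging argument: for $k \ge 3$, merging clusters $S_i$ and $S_j$ lowers $\mathrm{SepVol}$ by exactly $\vol_\mathcal{H}(S_i)\vol_\mathcal{H}(S_j)$ and lowers $\zeta_{\mathrm{lin}}$ by some $\delta_{ij}\ge 0$, and a short computation shows the merge does not increase the current ratio $\rho$ precisely when $\delta_{ij} \ge \rho\,\vol_\mathcal{H}(S_i)\vol_\mathcal{H}(S_j)$. It therefore suffices to find one such pair, for which I would establish the averaging bound $\sum_{i<j}\delta_{ij} \ge \zeta_{\mathrm{lin}}(\mathcal{C})$. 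This follows hyperedge-by-hyperedge: if $e$ has plurality count $m = \max_r|e\cap S_r|$ in cluster $p$, then merging $p$ with any $S_j$ reduces $e$'s contribution by $|e\cap S_j|$, and summing these terms alone over $j \ne p$ already recovers $|e| - m$, the contribution of $e$ to $\zeta_{\mathrm{lin}}$; all remaining merge terms are nonnegative. Since $\zeta_{\mathrm{lin}}(\mathcal{C}) = \rho\,\mathrm{SepVol}(\mathcal{C}) = \rho\sum_{i<j}\vol_\mathcal{H}(S_i)\vol_\mathcal{H}(S_j)$, averaging yields a pair with $\delta_{ij} \ge \rho\,\vol_\mathcal{H}(S_i)\vol_\mathcal{H}(S_j)$; merging it and iterating down to two clusters proves the ratio is minimized by a bipartition. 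I would close by noting that at $\lambda = \lambda^\star$ the all-in-one clustering ties with the optimum at objective value zero, so to single out the normalized cut partition one restricts to nontrivial clusterings or perturbs $\lambda$ just above $\lambda^\star$.
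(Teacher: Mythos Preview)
Your argument is correct. The first part (the Dinkelbach linearization over bipartitions) matches the paper's proof essentially verbatim, including the observation that one should take $\lambda$ at or just above the optimal ratio $\lambda^\star=\psi(S^*)$ to break the tie with the trivial one-cluster solution.

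The second part is where you diverge from the paper. The paper does not argue by merging. Instead it passes through the star expansion $G_\mathcal{H}$ of Section~\ref{sec:linear}: for a clustering $\mathcal{C}$ and its natural extension $\tilde{\mathcal{C}}$, the paper writes the ratio as $\Psi(\mathcal{C})=\big(\sum_{\tilde S}\cut(\tilde S)\big)/\big(\sum_S \vol_\mathcal{H}(S)\vol_\mathcal{H}(V\setminus S)\big)$, applies the elementary bound ``ratio of sums $\ge$ minimum of termwise ratios'' to isolate a single cluster $S\in\mathcal{C}$, and then uses the one-line inequality $\cut(\tilde S)\ge \cut_\mathcal{H}(S)$ (the two-way linear cut of $S$ against $V\setminus S$) to conclude $\Psi(\mathcal{C})\ge\psi(S^*)$ in one shot. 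Your route instead stays on the hypergraph, proves the averaging inequality $\sum_{i<j}\delta_{ij}\ge \zeta_{\mathrm{lin}}(\mathcal{C})$ hyperedge by hyperedge, and iteratively merges a ratio-nonincreasing pair until only two clusters remain. Both are valid; the paper's argument is shorter and dovetails with the bipartite/star-expansion machinery used elsewhere (Lemma~\ref{lem:lin} and the PBCC connection), while your merging lemma is self-contained, avoids the auxiliary graph entirely, and makes explicit exactly which property of the linear penalty is doing the work (namely that merging the plurality cluster with $S_j$ drops the penalty by exactly $|e\cap S_j|$).
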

\begin{proof}
	Let $S^* \subseteq V$ be a set of nodes that minimizes
	\begin{equation}
	\label{eq:ssc}
	\psi(S) = \frac{\cut_\mathcal{H}(S)}{\vol_\mathcal{H}(S) \vol_\mathcal{H}(\overline{S})} \,,
	\end{equation}
	and define $\lambda^* = \psi(S^*)$. Observe that this is simply a scaled version of the hypergraph normalized cut solution: for all $S \subseteq V$, $\psi(S) = \phi(S)/\vol_\mathcal{H}(V)$. Thus, $S^*$ is in fact the optimal hypergraph normalized cut solution as well.
	
	For a bipartition $\mathcal{C} = \{S, \overline{S}\}$, the penalty for positive hyperedges given in~\eqref{eq:hocc} reduces to
	\begin{equation}
	\sum_{e \in \mathcal{E}} \zeta(e, \mathcal{C}) = \cut_\mathcal{H}(S)\,,
	\end{equation}
	where $\cut_\mathcal{H}$ can be any generalized notion of hypergraph cut function~\cite{panli2017inhomogeneous,panli_submodular,veldt2020hypergraph}. 
	When we use degree weights $w_i = d_i$, the second term in objective~\eqref{eq:hocc} can be expressed in terms of the volume of $S$, so that we can write the objective for a bipartition $\mathcal{C} = \{S, \overline{S}\}$ as
	\begin{align}
	\label{eq:hls1}
	\textsc{HyperLam}(S,\lambda) 
	&= \cut_\mathcal{H}(S) - \lambda\sum_{i \in S} \sum_{i \in V \backslash S} d_i d_j + \sum_{i < j} \lambda d_i d_j\\
	\label{eq:hls2}
	&= \cut_\mathcal{H}(S) - \lambda \vol_\mathcal{H}(S) \vol_\mathcal{H}(\overline{S}) + \sum_{i < j} \lambda d_i d_j\,.
	\end{align}
	Since the last term is a constant, if we fix~$\lambda$ and minimize~\eqref{eq:hls2}, we can check whether the minimizer $S$ satisfies:
	\begin{equation*}
	\cut_\mathcal{H}(S) - \lambda \vol_\mathcal{H}(S) \vol_\mathcal{H}(\overline{S})  < 0 \implies \frac{\cut_\mathcal{H}(S)}{\vol_\mathcal{H}(S) \vol_\mathcal{H}(\overline{S})} < \lambda \,.
	\end{equation*}
	This means that minimizing the \textsc{HyperLam} objective over bipartitions is equivalent to solving the decision version of hypergraph normalized cut, i.e., given a fixed $\lambda$, find whether there is any bipartition $\{S, \overline{S}\}$ such that $\psi(S) < \lambda \implies \phi(S) < \vol_\mathcal{H}(S) \lambda$. Thus, for a small enough value of $\lambda$, which will be slightly larger than $\lambda^*$, the optimal solutions to \textsc{HyperLam} and $\phi$ will coincide. 
	
%	Because $S^*$ minimizes this ratio objective, we know that for some small value of $\lambda$, slightly larger than $\lambda^*$, the set $S^*$ will minimize the \textsc{HyperLam} objective. Observe now that this ratio style objective is just a scaled version of the hypergraph normalized cut objective, and therefore the optimizer for \textsc{HyperLam} and the optimizer for normalized cut coincide for some $\lambda$, if we restrict to forming only two clusters.
%	
%	For any $\lambda$, we can obtain a score of $\sum_{i<j} \lambda d_i d_j$ for the above objective by setting $S = V$. However, for any $\lambda > \lambda^*$, we can obtain a strictly better objective score with the clustering $ \{S^*, \overline{S}^*\}$, since $\cut_\mathcal{H}(S^*) - \lambda \vol_\mathcal{H}(S^*) \vol_\mathcal{H}(\overline{S}^*) < 0$ for every $\lambda > \lambda^*$. If we restrict to considering bipartitions, then the fact that $S^*$ minimizes objective~\eqref{eq:ssc} implies that for some $\lambda$ slightly larger than $\lambda^*$, $S^*$ also optimizes objective~\eqref{eq:hls2}. Observe now that objective~\eqref{eq:ssc} is just a scaled version of hypergraph normalized cut. Therefore the optimizer for \textsc{HyperLam} and the optimizer for normalized cut coincide for some $\lambda$, if we restrict to forming only two clusters.

\paragraph{Using the linear cut penalty}
Next we prove that for the linear cut penalty~\eqref{eq:linear}, the \textsc{HyperLam} objective generalizes hypergraph normalized cut objective with linear penalty~\eqref{eq:2lin}, even if we do not restrict to considering only bipartitions of $V$. To prove this, we use the characterization of the \textsc{HyperLam} objective given in Section~\ref{sec:linear}. Specifically, optimizing \textsc{HyperLam} with the linear cut penalty on a hypergraph $\mathcal{H} = (V,\mathcal{E})$ is equivalent to optimizing objective~\eqref{hyperlamlinear} on a bipartite graph $G_\mathcal{H} = (V\cup V_\mathcal{E}, E)$. For this relationship, every clustering $\mathcal{C}$ of $V$ induces a clustering $\tilde{\mathcal{C}}$ of $\tilde{V} = V \cup V_\mathcal{E}$, obtained by arranging nodes of $V_\mathcal{E}$ in a way that minimizes cut edges between $V$ and $V_\mathcal{E}$. We will call $\tilde{\mathcal{C}}$ the \emph{natural extension} of $\mathcal{C}$. Similarly, for $\tilde{S} \in \tilde{\mathcal{C}}$ and $S = \tilde{S} \cap V$, we call $\tilde{S}$ the natural extension of $S$ in $\tilde{\mathcal{C}}$.

%Let $\tilde{\mathcal{C}} = \{\tilde{S}_1, \hdots , \tilde{S}_k\}$ be a clustering of nodes in the star-expanded graph $G_\mathcal{H}$, and $\mathcal{C} = \{{S}_1, \hdots , {S}_k\}$ be the clustering of $V$ induced by $\tilde{\mathcal{C}} $. More precisely, $S_i = \tilde{S}_i \cap V$ for the $i \in \{1, 2, \hdots, k\}$ where $k$ is the number of clusters in $\mathcal{C}$. 
%%Define $V_i^E = V_\mathcal{E}\cap \tilde{S}_i$ so that $\tilde{S}_i = V_i^E\cup S_i$. 
%Recall also from Section~\ref{sec:linear} that \emph{any} clustering $\mathcal{C}$ of $V$ induces a clustering $\tilde{\mathcal{C}}$ on $\tilde{V}$, by arranging nodes of $V_\mathcal{E}$ in a way that minimizes cut edges between $V$ and $V_\mathcal{E}$. We will call $\tilde{\mathcal{C}}$ the \emph{natural extension} of $\mathcal{C}$.

For any two disjoint subsets $\tilde{S} \subseteq \tilde{V}$ and $\tilde{T} \subseteq \tilde{V}$ with $\tilde{S} \cap \tilde{T} = \varnothing$, let $\cut(\tilde{S}, \tilde{T})$ denote the number of edges between these two sets in $G_\mathcal{H}$, and define $\cut(\tilde{S}) = \cut(\tilde{S}, \tilde{V} \backslash\tilde{S})$. The \textsc{HyperLam} objective (with linear penalty) for this clustering is given by
\begin{equation}
\label{eq:hyperlincut}
\frac12 \sum_{\tilde{S} \in \tilde{\mathcal{C}}} \cut(\tilde{S}) + \sum_{\substack{i < j\\ i,j \in V}} \lambda d_i d_j - \frac{\lambda}{2} \sum_{{S} \in {\mathcal{C}}} \vol_\mathcal{H}(S) \vol_\mathcal{H}(V \backslash S)\,.
\end{equation}
We can interpret~\eqref{eq:hyperlincut} in terms of positive and negative edge mistakes in the underlying instance of correlation clustering. The first term corresponds to positive edge mistakes, made by cutting edges between $V$ and $V_\mathcal{E}$. The second term is the weight of all negative edges, and the third term subtracts the weight of negative edges between two different clusters, since these are the negative edges where the clustering \emph{does not} make a negative edge mistake. The first and third terms are multiplied by $1/2$, since these terms account for edges that are adjacent to exactly two clusters.

Mirroring our proof for the bipartition case, we can see that for a fixed $\lambda$, minimizing~\eqref{eq:hyperlincut} over arbitrary clusterings allows us to check whether there exists a clustering ${\mathcal{C}}$ such that
\begin{equation}
\Psi({\mathcal{C}}) = \frac{\sum_{\tilde{S} \in \tilde{\mathcal{C}}} \cut(\tilde{S}) }{\sum_{{S} \in {\mathcal{C}}} \vol_\mathcal{H}(S) \vol_\mathcal{H}(V \backslash S)} < \lambda \,,
\end{equation}
where $\tilde{\mathcal{C}}$ is the natural extension of $\mathcal{C}$. Therefore, for a certain value of $\lambda$, the solution to \textsc{HyperLam} with the linear hyperedge cut penalty will be the same as the minimizer for $\Psi$. Observe that for a bipartition $\mathcal{C} = \{S, V\backslash S\}$, $\Psi(\mathcal{C}) = \psi(S)$. Although $\Psi$ is defined for clusterings of arbitrary size, we will show that it is minimized by a bipartition. First, we prove a relationship between the cut function in $G_\mathcal{H} = (V\cup V_\mathcal{E}, E)$ (denoted by $\cut$) and the two-way linear cut function~\eqref{eq:2lin} in the original hypergraph $\mathcal{H} = (V, \mathcal{E})$ (denoted by $\cut_\mathcal{H}$). Let $S$ be a cluster in an arbitrary clustering $\mathcal{C}$ (not necessarily a bipartition), and let $\tilde{S}$ and $\tilde{\mathcal{C}}$ be their natural extensions so that $S = \tilde{S}\cap V$. Then
\begin{align*}
\cut(\tilde{S}) &=  \cut(S, (\tilde{V}\backslash \tilde{S}) \cap V_\mathcal{E}) + \cut(\tilde{S} \cap V_\mathcal{E}, V \backslash S)\\
&= \sum_{v_e \in (\tilde{V}\backslash \tilde{S}) \cap V_\mathcal{E}} \cut(S, \{v_e\}) + \sum_{v_e \in \tilde{S} \cap V_\mathcal{E}} \cut(V\backslash S, \{v_e\})\\
&= \sum_{e \in \mathcal{E}} |S \cap e| + \sum_{e \in \mathcal{E}} |V\backslash S \cap e|\\
&\geq \sum_{e \in \mathcal{E}} \min \{ |S\cap e| , |V\backslash S \cap e|\}= \cut_\mathcal{H}(S)\,.
\end{align*}

%We are now able to prove that $\mathcal{C}^* = \{S^*, V\backslash S^*\}$ minimizes $\Psi$ by letting $\mathcal{C}'$ be an arbitrary minimizer for $\Psi$ and showing that $\Psi(\mathcal{C}^*) = \Psi(\mathcal{C}')$:
Finally, let $\mathcal{C}'$ be an arbitrary minimizer for $\Psi$. Then, 
\begin{align}
\Psi({\mathcal{C}'}) &=\frac{\sum_{\tilde{S} \in \tilde{\mathcal{C}'}} \cut(\tilde{S}) }{\sum_{{S} \in {\mathcal{C}'}} \vol_\mathcal{H}(S) \vol_\mathcal{H}(V \backslash S)}\\
& \geq \min_{S \in\mathcal{C}'}\,\, \frac{\cut(\tilde{S}) }{ \vol_\mathcal{H}(S) \vol_\mathcal{H}(V \backslash S)} \\
&\geq  \frac{\cut_\mathcal{H}(S^*) }{ \vol_\mathcal{H}(S^*) \vol_\mathcal{H}(V \backslash S^*)}\\
&= \psi(S^*) = \Psi(\mathcal{C}^*).
\end{align}
This confirms that for a certain value of $\lambda$, $\mathcal{C}^* = \{S^*, V\backslash S^*\}$ is optimal for \textsc{HyperLam} with a linear hyperedge cut penalty.

\end{proof}
%Due to lack of space, we omit a proof for the second statement regarding optimal clusterings under the linear penalty. The proof follows by generalizing a proof on the relationship between LambdaCC and normalized cut in graphs~\cite{Veldt:2018:CCF:3178876.3186110}. Should the manuscript be accepted, we will include full details in an online full version.

\section{Parametric Bipartite Clustering} \label{sec:pbcc}
Next we present a parameterized variant of bipartite correlation clustering (PBCC) in graphs, which we prove generalizes a number of other clustering objectives
in bipartite graphs, and comes with several novel approximation guarantees. Let $G = (V_1, V_2, E)$ be a bipartite graph in which $V_1$ and $V_2$ are node sets and $E$ is a set of edges between nodes in $V_1$ and $V_2$. In order to define an instance of Parametric Bipartite Correlation Clustering (PBCC), we first define parameters $\mu_1$, $\mu_2$, and $\beta$, all in the interval $[0,1]$. We then associate each $e \in E$ with a positive edge of weight $1-\beta$, and every $e \in (V_1 \times V_2) - \{E\}$ with a negative edge with weight $\beta$. Additionally, each pair of nodes in~$V_1$ is given a negative edge of weight~$\mu_1$, and
each pair of nodes in $V_2$ is given a negative edge of weight~$\mu_2$. The result is a complete, weighted instance of correlation clustering, where the underlying positive edge structure is a bipartite graph. We illustrate an instance of the problem in Figure~\ref{fig:pbcc}. 
Our PBCC objective is
\begin{equation}
\label{eq:pbcc}
\begin{split}
\textsc{PBCC}(\mathcal{C}) = 
& {\sum_{i \in V_1, \, j \in V_2} [\beta (1-A_{ij})(1-z_{ij})  + (1-\beta) A_{ij}z_{ij} ]} \\
&+ \textstyle{\sum_{(i,j) \in V_1 \times V_1} \mu_1 (1-z_{ij})  + \sum_{(i,j) \in V_2 \times V_2} \mu_2 (1-z_{ij})  }\,,
\end{split}
\end{equation}
where $A_{ij} = 1$ if $(i,j) \in E$, but is zero otherwise, and $z_{ij}$ is the indicator for node separation in $\mathcal{C}$ ($z_{ij} = 0$ means $(i,j)$ are clustered together) as before.
\begin{figure}[t]
	\centering
	\includegraphics[width=.5\linewidth]{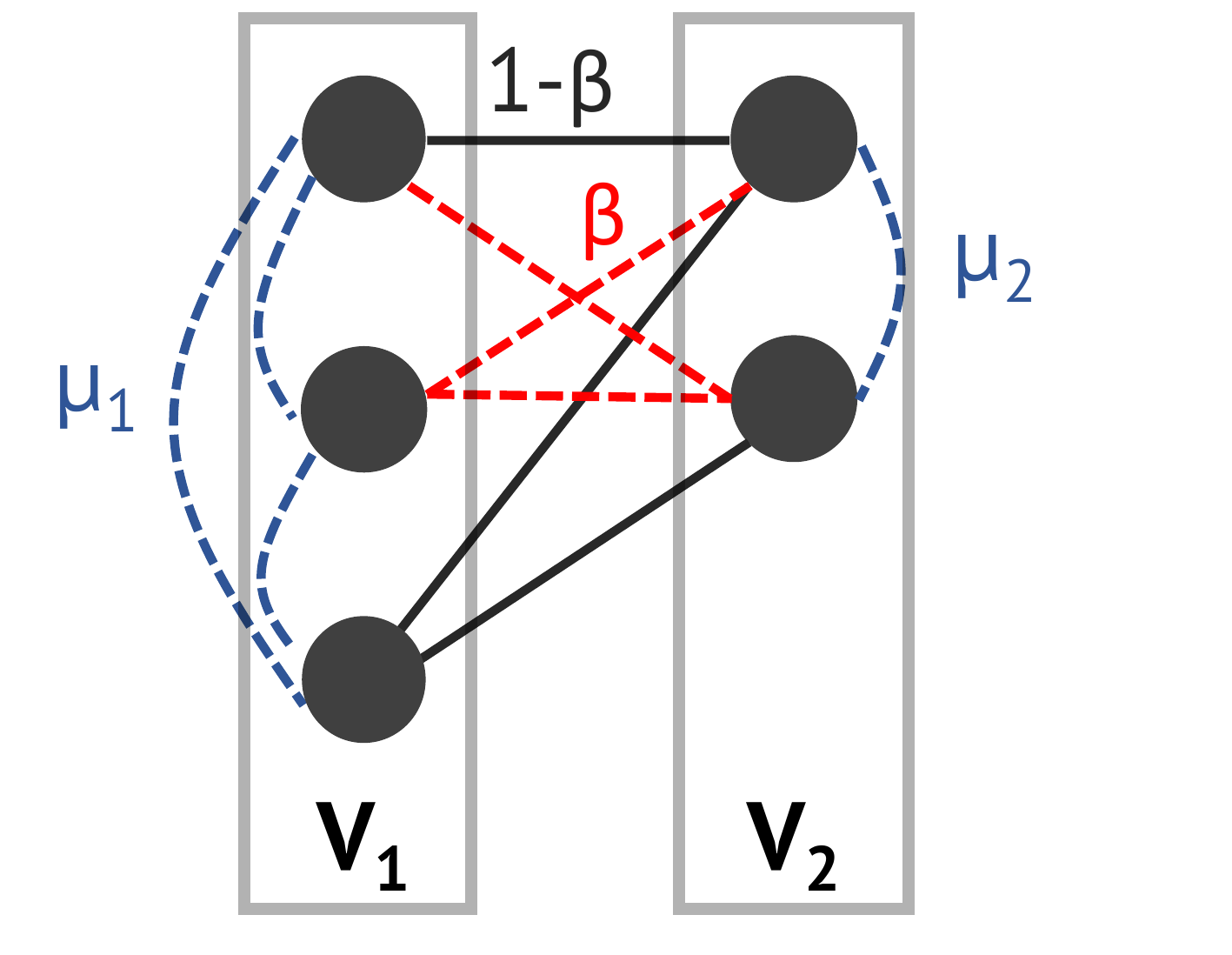}
	\caption{Parameterized BCC is given by a complete signed graph with edge weighted parameterized by $\mu_1$, $\mu_2$ and $\beta$. Edges of weight $\beta$ correspond to missing edges in some underlying bipartite graph $G = (V_1, V_2, E)$. 
		%\david{If time... Add notes on sets $V_1$ and clarify that edges with weight $\beta$ are missing.}
	}
	\label{fig:pbcc}
\end{figure}
\begin{table}%[tbhp]
	\centering
	\caption{Equivalence and approximation results for PBCC; $\varepsilon$ represents a small, graph dependent number.}
	\begin{tabular}{llll}
		\toprule
		 \textbf{Parameters} & \textbf{Equivalence} & \textbf{Approx.} \\
		\midrule
		$\beta = \mu_1 = \mu_2 = \lambda$ & LambdaCC & see~\cite{Veldt:2018:CCF:3178876.3186110}, \cite{Veldt2018ccgen}\\
		\midrule
		$\mu_1 = \mu_2 \geq (1-\beta)$ & Bip. Matching & 1  (Thm~\ref{thm:matching}) \\		
		\midrule
		$\mu_1 = \mu_2 = 0$, $\beta \geq 1-\varepsilon$ & Bicluster deletion & 4 (Thm~\ref{thm:4bicluster})\\
		\midrule
		$\mu_1 = \mu_2 = 0$, $\beta \geq \frac{1}{2}$ & Generalized BCC & $6-\frac{1}{\beta}$ (Thm~\ref{thm:5orbetter})\\
		\midrule
		$\mu_1 = \mu_2\in [0,1]$, $\beta \geq \frac{1}{2}$ & - & 5 (Thm~\ref{thm:5app})\\
		\midrule
		$\mu_1 = \lambda$, $\mu_2 = 0$, $\beta = 0$ & \textsc{HyperLam} & $O(\log n)$ \\
		\bottomrule
	\end{tabular}
	\label{tab:pbcc}
\end{table}
The PBCC objective is closely related to several other well-studied problems. We summarize a list of equivalence results and approximation algorithms for PBCC in Table~\ref{tab:pbcc}, based on the results we prove in this section and the next.

 When $\mu_1 = \mu_2 = 0$ and $\beta = 1/2$, the problem corresponds to the standard unweighted bipartite correlation clustering problem (BCC)~\cite{Ailon2011bcc,amit2004bicluster}. When $\mu_1 = \mu_2 = \beta$, it is equivalent to applying the LambdaCC framework~\cite{Veldt:2018:CCF:3178876.3186110} to a bipartite graph.

If $\beta > |V_1||V_2| /(|V_1||V_2| + 1)$ and $\mu_1 = \mu_2 = 0$, then making a mistake at a single negative edge of weight $\beta$ introduces a
greater weight of disagreements than placing each node into a singleton cluster. Therefore, the objective will be optimized by making a minimum number of positive-edge mistakes, subject to all clusters being bicliques. Thus, in this parameter regime, PBCC is equivalent to bicluster deletion, the problem of removing a minimum number of edges from a bipartite graph to partition it into disjoint bicliques. 

\subsection{Relationship with Bipartite Matching}
Although PBCC is NP-hard in general, our next theorem shows that in a certain parameter regime, PBCC is equivalent to solving a bipartite matching problem on $G = (V_1, V_2, E)$. Therefore, the problem can be solved in polynomial time in this regime.
\begin{theorem}
	\label{thm:matching}
	If parameters $\mu_1$, $\mu_2$, and $\beta$ satisfy $\min \{\mu_1, \mu_2\} \geq (1-\beta)$, then the optimal solution to PBCC for these parameters is the same as finding a maximum bipartite matching on $G = (V_1, V_2,E)$.
\end{theorem}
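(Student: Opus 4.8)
The plan is to recast the objective as a purely combinatorial minimization over clusterings and then recognize the optimum as a maximum matching. First I would account for the cost cluster by cluster. Writing $a = |S \cap V_1|$, $b = |S \cap V_2|$, and $E_{\mathrm{int}}(S)$ for the number of graph edges of $G$ internal to a cluster $S$, a direct bookkeeping of cut positive edges (weight $1-\beta$), internal cross non-edges (weight $\beta$), and internal same-side negative edges (weights $\mu_1,\mu_2$) collapses the objective into
$$\textsc{PBCC}(\mathcal{C}) = (1-\beta)|E| + \sum_{S \in \mathcal{C}} f(S), \qquad f(S) = -E_{\mathrm{int}}(S) + \beta\,ab + \mu_1\binom a2 + \mu_2\binom b2,$$
where the first term is constant. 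So minimizing PBCC is equivalent to minimizing $\sum_S f(S)$.

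Next I would isolate the single-cluster inequality that drives everything: for every cluster $S$,
$$f(S) \ge -(1-\beta)\, m(S),$$
where $m(S)$ is the size of a maximum matching of $G$ restricted to $S$. Since one matched edge forms a size-two cluster with $f = \beta-1$ and singletons contribute $0$, this says that a cluster can never beat being split into a maximum matching of itself plus leftover singletons. Granting it, the theorem follows quickly: summing gives $\sum_S f(S) \ge -(1-\beta)\sum_S m(S) \ge -(1-\beta)\,\nu(G)$, because the within-cluster matchings are vertex-disjoint and hence combine into a single matching of $G$, so $\sum_S m(S) \le \nu(G)$, the maximum matching number. Conversely, clustering a maximum matching as size-two clusters and leaving every other node as a singleton attains $\sum_S f(S) = -(1-\beta)\nu(G)$. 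Thus the optimum equals $(1-\beta)(|E| - \nu(G))$ and is realized by any maximum matching of $G$.

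The hard part will be the single-cluster inequality. Using $\mu_1,\mu_2 \ge 1-\beta$ and writing $\gamma = 1-\beta$, I note that larger $\mu_1,\mu_2$ only increase $f(S)$, so it suffices to treat $\mu_1 = \mu_2 = \gamma$; then $f(S) + \gamma\, m(S)$ is affine in $\gamma \in [0,1]$, and I only need to check the two endpoints. At $\gamma = 0$ it reduces to $ab - E_{\mathrm{int}}(S) \ge 0$, which is immediate. At $\gamma = 1$ it reduces to the purely combinatorial claim
$$E_{\mathrm{int}}(S) \le \binom a2 + \binom b2 + m(S).$$

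This last inequality is where the real work sits. I would prove it by fixing a maximum matching of the induced bipartite subgraph and observing that no edge can join two unmatched vertices (otherwise it augments the matching); splitting the $a+b$ vertices into matched and unmatched on each side and counting the edges, which must avoid the unmatched-unmatched block, yields $E_{\mathrm{int}}(S) \le m(a+b-m)$ with $m = m(S)$. It then remains to verify $m(a+b-m) \le \binom a2 + \binom b2 + m$, which rearranges to $(a-m)(a-m-1) + (b-m)(b-m-1) \ge 0$, true because $a \ge m$ and $b \ge m$ are integers. Reassembling the two endpoint checks closes the single-cluster bound, and the counting argument of the previous paragraph then delivers both the matching lower bound and its attainment, completing the proof.
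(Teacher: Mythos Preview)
Your proof is correct and takes a genuinely different route from the paper's. The paper argues by local improvement: assuming $\mu_1=\mu_2=\mu>1-\beta$, it fixes an optimal clustering, takes any cluster $S=S_1\cup S_2$, and shows in three steps that (i) $S$ must be a biclique, (ii) $|S_1|=|S_2|$, and (iii) $|S_1|=|S_2|=1$---each time by exhibiting a node removal or a split into matched pairs that strictly lowers the objective. The equality case $\mu=1-\beta$ and the case $\mu_1\neq\mu_2$ are then handled by separate remarks. Your argument instead rewrites the objective as $(1-\beta)|E|+\sum_S f(S)$ and proves a quantitative per-cluster bound $f(S)\ge -(1-\beta)m(S)$ via the affine-in-$\gamma$ endpoint trick, reducing everything to the combinatorial inequality $E_{\mathrm{int}}(S)\le\binom{a}{2}+\binom{b}{2}+m(S)$, which you dispatch with the standard ``no edge between two unmatched vertices'' bound $E_{\mathrm{int}}(S)\le m(a+b-m)$.

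What each buys: the paper's approach is shorter and more structural---it directly reveals what optimal clusters must look like---but needs separate handling of the boundary case $\mu=1-\beta$. Your approach is more uniform (no case split on strict vs.\ weak inequality, and $\mu_1\neq\mu_2$ is absorbed by the monotonicity remark), and as a bonus it pins down the exact optimal value $(1-\beta)(|E|-\nu(G))$, which the paper's proof does not make explicit.
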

\begin{proof}
	First consider $\mu = \mu_1 = \mu_2 > 1-\beta$, and let $\mathcal{C}$ denote the optimal clustering in this case. Let $S = \{S_1 \cup S_2\}$ be an arbitrary cluster in $\mathcal{C}$, where $S_i \subseteq V_i$ for $i = 1,2$. Assume without loss of generality that $|S_1| \leq |S_2|$. In three steps, we will prove that $S$ contains at most one node from each of $V_1$ and $V_2$, and thus $\mathcal{C}$ is in fact just a matching.
	
	\textbf{Step 1.} Observe that \textbf{$S$ must be a biclique} in terms of positive edges between $S_1$ and $S_2$. If we assume not, then there exists a node $s \in S_2$ that does not share a positive edge with every node in $S_1$. By removing $s$ from $S$, we no longer make negative edge mistakes between $s$ and the rest of $S_2$, which decreases the objective by $\mu (|S_2| - 1)$. At the same time, this introduces new errors weighing \emph{at most} $(1-\beta) (|S_1| - 1)$, due to positive edge mistakes between $s$ and $S_1$. This \emph{decreases} the overall objective score by at least
	\[ \mu (|S_2| - 1) - (1-\beta) (|S_1| - 1) > 0,\]
	since $|S_2| > |S_1|$ and $(1-\beta) < \mu$. In other words, the weight of mistakes strictly decreases if we removed $s$. This would be a contradiction to the optimality of $\mathcal{C}$. Thus, no such $s$ exists, and $S$ is a biclique of positive edges.
	
	\textbf{Step 2.} If $|S| > 1$, then $|S_1| = |S_2|$. If we assume instead that $|S_2| \geq |S_1| +1$, then removing any $s \in S_2$ would decrease the objective by $\mu (|S_2|-1)$ and increase the objective by $(1-\beta)|S_1|$, since $S$ is a biclique. Since $|S_1| \leq |S_2| -1$, this again leads to an overall decrease in the weight of mistakes:
	\[ \mu (|S_2| - 1) - (1-\beta) |S_1|  \geq \mu (|S_2| - 1) - (1-\beta) (|S_2|-1) > 0,\]	
	%	\[ (1-\beta) |S_1| - \mu (|S_2| - 1) \leq (1-\beta) (|S_2|-1) - \mu (|S_2| - 1) <0,\]
	so a contradiction is shown.
	
	\textbf{Step 3.} If $|S| > 1$, then $|S_1| = |S_2| = 1$. By Step 2, $S_1$ and $S_2$ have the same size $k$. We will prove that $k = 1$. By Step 1, every node in $S_1$ shares a positive edge with every node in $S_2$. Note that this implies there is a perfect matching between the two sides. Thus, we can split up $S$ into multiple subclusters where each node in $S_1$ is paired up with a single positive neighbor in $S_2$. This removes the negative edge mistakes on both sides of the graph, improving the objective by $\mu k(k-1)$. In turn, it introduces positive edge mistakes weighing $(1-\beta)k (k-1)$, since each node on one side now only is clustered with a single node on the other side. However, this change improves the overall objective score, since $\mu > (1-\beta)$. Therefore, the only way for there not be a contradiction is if $k = 1$ to begin. 
	
	The above steps show that when $\mu > (1-\beta)$, every cluster in $\mathcal{C}$ is either a singleton, or it contains exactly one node from $V_1$ and one node from $V_2$. Therefore, $\mathcal{C}$ is a matching, and in fact it will be a maximum matching in order to make as few mistakes for the correlation clustering as possible. Observe now that if $\mu = (1-\beta)$, even if there exists an optimal clustering that is not a matching, we can use the above arguments to show we can break the clustering into a matching without making the objective score worse. Finally, if $\mu_1 \neq \mu_2$ but $\mu = \min \{\mu_1, \mu_2\} \geq 1-\beta$, then having one $\mu_i > \mu$ simply introduces \emph{more} repulsion between nodes due to negative weights. Thus, the arguments above still hold, and a maximum matching is still optimal.
\end{proof}

\subsection{Relationship with \textsc{HyperLam}}
When $\mu_1 = \lambda$, $\mu_2 = 0$, and $\beta = 0$, PBCC is equivalent to a special instance of \textsc{HyperLam} with a linear hyperedge cut penalty~\eqref{eq:linear}. As we noted in Section~\ref{sec:linear}, when we use this penalty, the \textsc{HyperLam} objective is equivalent to an instance of correlation clustering defined by performing a star expansion. This results in an instance of PBCC where $V_1 = V$ is the set of original nodes, each pair of which has a negative edge of weight $\mu_1 =
\lambda$. The auxiliary nodes in $V_\mathcal{E}$ constitute~$V_2$, with $\mu_2 = 0$, and edges between $V$ and $V_2$ all have weight $1-\beta = 1$.
%
%
%Consider applying the \textsc{HyperLam} objective to a hypergraph $\mathcal{H} = (V, \mathcal{E})$ with unit node weights: $w_i = 1$ for all $i \in V$.
%When we use the linear hyperedge cut penalty~\eqref{eq:linear}, the \textsc{HyperLam} objective is equivalent to an instance of correlation clustering defined by performing a star expansion~\cite{zien1999}.
%This replaces each hyperedge~$e$ with an auxiliary node~$v_e$, and links every node in~$e$ to~$v_e$ with a unit-weight
%edge. 

\section{Approximations and Heuristics}
\label{sec:algs}
We now turn our attention to specific approximation guarantees that can be obtained by our objectives in different parameter regimes. We begin by
reviewing a general strategy for obtaining approximation guarantees for variants of correlation clustering, through which we prove approximation
guarantees for PBCC. In order to approximate \textsc{HyperLam}, we combine existing approximation algorithms for correlation clustering with
techniques for reducing a hypergraph to a related graph. We conclude with some heuristic approaches for \textsc{HyperLam}.

\subsection{General LP Rounding Algorithm for CC}
	\begin{algorithm}[tb]
		\caption{Pivot}
		\begin{algorithmic}[5]
			\State{\bfseries Input:} Unweighted signed graph $G = (V,E^+,E^-)$
			\State {\bfseries Output:} Clustering~$\mathcal{C} =\textsc{Pivot}(G)$
			\State Select a pivot node $k \in V$
			\State Form cluster $S = \{v \in V: (k,v) \in E^+ \}$
			\State Output clustering $\mathcal{C} = \{S, \textsc{Pivot}(G\backslash S) \}$
		\end{algorithmic}
		\label{alg:piv}
	\end{algorithm}
	
	\textsc{Pivot} (aka Algorithm~\ref{alg:piv}) is a simple algorithm unweighted correlation clustering.
When pivots are chosen uniformly at random, Ailon et al.~\cite{AilonCharikarNewman2008} showed that this algorithm returns a 3-approximation for complete unweighted correlation clustering. Later, van Zuylen and Williamson~\cite{vanzuylen2009deterministic} produced a de-randomized 3-approximation. 
	%We extract a useful theorem from the work of van Zuylen and Williamson~\cite{vanzuylen2009deterministic} that will be useful for developing our approximation algorithms. 
	We review a generic algorithm for correlation clustering from the work of van Zuylen and Williamson, which we apply later in developing approximation algorithms for new parametric correlation clustering variants. Pseudocode for this method, which we call \textsc{GenRound}, is given in Algorithm~\ref{alg:genround}. 
	%The procedure solves and rounds a linear programming (LP) relaxation. 
	\begin{algorithm}[tb]
		\caption{\textsc{GenRound}}
		\begin{algorithmic}[5]
			\State{\bfseries Input:} CC instance $G = (V,W^+,W^-)$, parameter $\delta \in [0,1]$.
			\State {\bfseries Output:} Clustering~$\mathcal{C}$ of $G$.
			\State Solve LP-relaxation of~\eqref{eq:gencc} to obtain \emph{distances} $x_{ij}$, for each $i \neq j$.
			\State 
%Define $\tilde{G} = (V,\tep, \tem)$ where 
			\(\tep \gets \{(i,j) : x_{ij} < \delta \}, \hspace{.5cm} \tem \gets \{(i,j): x_{ij} \geq \delta \}  \)
			\State Apply \textsc{Pivot} to $\tilde{G}=(V,\tep, \tem)$.
		\end{algorithmic}
		\label{alg:genround}
	\end{algorithm}
	One can prove approximation results for special input problems using the following theorem. We have adapted this result from the work of van Zuylen and Williamson~\cite{vanzuylen2009deterministic} to match our notation and presentation. The theorem and its proof rely on a careful consideration of so-called \emph{bad triangles} in the rounded unweighted graph. A bad triangle is a triplet of nodes in the graph which contains two positive edges and one negative edge. 
		\begin{theorem}
		\label{thm:3pt1} 
		(Theorem 3.1 in~\cite{vanzuylen2009deterministic}).
		Given a weighted instance of correlation clustering $G = (V, W^+, W^-)$, let $c_{ij}^{} = w_{ij}^+ x_{ij}^{} +
		w_{ij}^-(1-x_{ij}^{})$. \textsc{GenRound} returns an $\alpha$-approximation for the min-disagree objective~\eqref{eq:gencc} if the threshold
		parameter,~$\delta$,
		is chosen so that the graph $\tilde{G} = (V,\tilde{E}^+, \tilde{E}^-)$ satisfies the following conditions:
		\begin{enumerate}
			\item For all~$(i,j) \in \tilde{E}^+$, we
			have~$w_{ij}^- \leq \alpha c_{ij}^{}$, and for all $(i,j) \in \tem$, we have $w_{ij}^+ \leq \alpha c_{ij}^{}$.
			\item For every triangle $(i,j,k)$ in $\tilde{G}$, with $\{(i,j), (j,k)\} \subseteq \tilde{E}^+$ and $(i,k) \in \tilde{E}^-$, we have $w_{ij}^+ + w_{jk}^+ + w_{ik}^- \leq \alpha \left(   c_{ij}^{} + c_{jk}^{} + c_{ik}^{} \right)$.
		\end{enumerate}
	\end{theorem}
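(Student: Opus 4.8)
The plan is to combine a linear-programming lower bound with a charging argument that assigns the mistakes made by \textsc{Pivot} on $\teg$ to \emph{bad triangles}, exactly as anticipated by the two hypotheses. First I would record the lower bound: the LP relaxation of~\eqref{eq:gencc} has optimal value $\sum_{i<j} c_{ij}$, since this is precisely its objective evaluated at the optimal fractional $x_{ij}$ by the definition of $c_{ij}^{}$, and because the relaxation drops the integrality constraint from the binary program~\eqref{eq:gencc}, we have $\sum_{i<j} c_{ij} \le \mathrm{OPT}$. Hence it suffices to bound the cost of the clustering returned by \textsc{GenRound} by $\alpha \sum_{i<j} c_{ij}$.

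Next I would classify each pair $(i,j)$ by whether \textsc{Pivot} on $\teg$ agrees with the rounding. For an \emph{agreement} pair---either $(i,j) \in \tep$ that ends up clustered together, or $(i,j)\in\tem$ that ends up separated---the incurred cost is $w_{ij}^-$ or $w_{ij}^+$ respectively, each at most $\alpha c_{ij}^{}$ by the first hypothesis. The remaining cost comes from \emph{mistakes}: a positive edge of $\tep$ that is cut, or a negative edge of $\tem$ whose endpoints are merged. The central observation is that every such mistake is created at the unique step in which one of its endpoints, together with the chosen pivot $k$, forms a bad triangle of $\teg$: if $(u,v)\in\tem$ is merged then $(k,u),(k,v)\in\tep$, while if $(u,v)\in\tep$ is cut then exactly one of $(k,u),(k,v)$ lies in $\tem$ and the other in $\tep$. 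In either case the mistake weight is one of the three terms bounded by the second hypothesis, so it is dominated by $\alpha$ times the sum of the $c$-values on the triangle's edges.

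To turn this per-triangle bound into a global one, I would invoke the derandomized pivot selection of van Zuylen and Williamson: process the graph recursively, and at each step choose the pivot $k$ minimizing the difference between the cost of the pairs \emph{settled} at that step and $\alpha$ times their total $c$-budget. Because the pairs settled at distinct steps partition all pairs, a per-step inequality of the form ``settled cost $\le \alpha\cdot$ settled budget'' telescopes to the desired global bound. The existence of such a pivot at every step follows by averaging over a uniformly random pivot on the active set $A$ and showing the expected difference is nonpositive: agreement pairs contribute nonpositively by the first hypothesis, and each bad triangle is charged with probability $1/|A|$ while the $c$-budget of its edges is accrued with matching probability, so the second hypothesis forces the expected mistake cost below $\alpha$ times the expected budget.

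The main obstacle is precisely this last accounting step. A naive triangle-by-triangle charge over-counts, because two charged bad triangles that share the pivot $k$ also share an edge incident to $k$, so their $c$-budgets overlap; consequently one cannot independently bill each triangle to its own three edges. The resolution is the averaged, per-step argument above, in which the probability $1/|A|$ that a given triangle is the one charged exactly compensates for this sharing and keeps the total consumption of each $c_{ij}^{}$ within its budget. I expect the bookkeeping that verifies $E[\text{settled cost}] \le \alpha\, E[\text{settled budget}]$ for a random pivot---matching the probability that a pair is settled against the probability that its triangle is charged---to be the most delicate part, and it is exactly the content we adapt from Theorem 3.1 of van Zuylen and Williamson.
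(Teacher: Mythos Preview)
The paper does not prove this theorem; it is quoted (with adapted notation) as Theorem~3.1 of van Zuylen and Williamson and used as a black box, the only accompanying remark being that random pivoting gives the bound in expectation while a careful deterministic pivot choice achieves it outright. Your plan is a faithful outline of the van Zuylen--Williamson argument itself---the LP lower bound $\sum_{i<j} c_{ij}\le\mathrm{OPT}$, the split into agreement pairs (controlled by hypothesis~1) and mistake pairs (each witnessed by a bad triangle and controlled by hypothesis~2), and the derandomization by choosing at every step a pivot that does at least as well as a uniformly random one---so there is nothing in the present paper to contrast with, and your approach matches the cited source.

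One small caution on the part you already flag as delicate: the phrase ``each bad triangle is charged with probability $1/|A|$ while the $c$-budget of its edges is accrued with matching probability'' is too coarse to close the argument by itself. In the actual accounting one sums over all pivots $k\in A$; for a fixed bad triangle the three choices of pivot among its vertices produce exactly the three mistake weights $w_{ij}^+$, $w_{jk}^+$, $w_{ik}^-$ on the left of hypothesis~2, but the same edges are also settled (as agreements) by many pivots outside the triangle, and those contributions must be handled via hypothesis~1 without re-spending the edge's $c$-budget. The averaging does work, but the bookkeeping is per-edge rather than per-triangle, which is precisely the subtlety you anticipate.
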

When applying \textsc{Pivot} in Algorithm~\ref{alg:genround}, selecting the pivot node uniformly at random gives an \emph{expected} $\alpha$-approximation. A deterministic algorithm with the same approximation factor $\alpha$ can be obtained via a careful selection of pivot nodes~\cite{vanzuylen2009deterministic}.

\subsection{Graph Reductions \mbox{for HyperLam}}
\label{sec:reduction}
Although \textsc{HyperLam} is NP-hard to optimize, we can obtain approximation algorithms for the objective using two different techniques for converting hypergraphs to graphs. 
%We specifically consider two approaches. %, which we illustrate in Figure~\ref{}.

\textbf{Weighted clique expansion:} Replace each hyperedge $e \in \mathcal{E}$ with a clique on $e$ where each edge has weight $1/(|e| - 1)$. If two
nodes appear together in multiple hyperedges, assign a weight equal to the sum of weights from each such clique expansion.

\textbf{Star expansion:} As outlined in Section~\ref{sec:linear}, replace each hyperedge $e \in \mathcal{E}$ with an auxiliary node $v_e$ and an edge from each $v \in e$ to $v_e$. If we use weights $w_i =1$ for all $i \in V$, this is equivalent to an instance of PBCC with $\mu_1 = \lambda$, $\mu_2 = 0$, and $\beta = 0$.

For each expansion technique, we still include a negative edge of weight $\lambda w_i w_j$ between each pair $(i,j) \in V \times V$, where $w_i$ is the weight for node $i$. Either way, the result is an instance of weighted correlation clustering, which we can solve with existing approximation algorithms.

% to obtain an $O(\log n)$ approximation guarantee~\cite{CharikarGuruswamiWirth2005,DemaineEmanuelFiatEtAl2006}.

The weighting scheme for the clique expansion is chosen specifically to approximately model the all-or-nothing hyperedge cut penalty~\eqref{eq:allornothing}. For
three-uniform hypergraphs, the relationship is exact~\cite{ihler1993modeling}. For a $k$-node hyperedge, with $k > 3$, the minimum penalty for
splitting the clique comes from placing all but one node in the same cluster, giving a penalty equal to $(k-1)/(k-1) = 1$. The maximum possible
penalty, when all~$k$ nodes in~$e$ are placed in different clusters, is ${k \choose 2}\frac{1}{k-1} = \frac{k}{2}$.
Thus, the penalty at each positive hyperedge in the resulting reduced graph will be within a factor $k/2$ of the original all-or-nothing penalty for any clustering $\mathcal{C}$. Meanwhile, the star expansion enables us to \emph{exactly} model the linear cut penalty~\eqref{eq:linear}, as shown in Lemma~\ref{lem:lin}. 
%Each auxiliary node $v_e$ is attached only to nodes that define a hyperedge $e$ in the original hypergraph. Therefore, in the optimal clustering of the star expansion graph, $v_e$ will be placed in the cluster that has the most nodes from $e$. The penalty then will equal the number of nodes that are clustered away from $v_e$, which be exactly equal to the linear penalty~\eqref{eq:linear}.

Thus, applying existing approximation algorithms for correlation clustering~\cite{CharikarGuruswamiWirth2005,DemaineEmanuelFiatEtAl2006}, we get an
$O(k \log n)$ approximation for \textsc{HyperLam} with all-or-nothing penalty via the weighted clique expansion, where $k$ is the maximum size hyperedge. We also obtain an $O(\log n)$ approximation for
\textsc{HyperLam} with linear hyperedge penalty via the star expansion.

% Kumar et al.~\cite{kumar2020new} previously considered a modularity-based approach for hypergraph clustering based on the same type of clique expansion. These authors applied the same weight $1/(|e|-1)$ to each edge in a clique expansion of a hypergraph $|e|$, as this preserves the degree distribution of nodes in the original hypergraph.
%They then considered applying the modularity objective~\cite{newman2004modularity} to the resulting graph. Their approach corresponds to applying a weighted clique expansion to an instance of \textsc{HyperLam}, and specifically setting $\lambda = 1/(\vol_\mathcal{H}(V))$. The connection to correlation clustering we show, along with the resulting approximation algorithms, provides further theoretical motivation for this choice of weighted clique expansion. Despite this connection to a previous clique expansion technique for modularity, we note that our original hypergraph objective~\eqref{eq:hocc} nevertheless differs from generalizations of modularity defined directly for hypergraphs~\cite{kaminski2019clustering}, as opposed to modularity objectives applied to clique expansions of hypergraphs.

\subsection{A Four-Approx for Bicluster Deletion}
We now show how \textsc{GenRound} and Theorem~\ref{thm:3pt1} combine to develop a $4$-approximation for bicluster deletion:
the first constant-factor approximation for this problem. Rather than the edge weights presented in the last section, we
view bicluster deletion as a general weighted correlation clustering problem with the following weights
%, which seeks the minimum number of edges to remove from a bipartite graph to partition it into disjoint bicliques. Equivalently, the input is a complete signed bipartite graph $G = (V_1, V_2, E^+, E^-)$ where $E^+ \subset V_1 \times V_2$ and $E^- =  (V_1 \times V_2) \backslash E^+$, but we strictly prohibit mistakes at negative edges. This is the bipartite analog of the cluster deletion problem~\cite{ShamirSharanTsur2004}, which seeks the minimum number of edges to remove to partition a graph into disjoint cliques. Bicluster deletion has been studied much less extensively that cluster deletion, though a few previous result exist. To our knowledge, ours is the first constant factor approximation for the problem.
%
%Note that this problem is equivalent to PBCC when $\mu = 0$ and $\beta$ is so close to one that it becomes prohibitively expensive to make even a single negative edge mistake of weight $\beta$. For example, if $\beta > |V_1||V_2| /(|V_1||V_2| + 1)$, then one negative edge mistake introduces a greater weight of disagreements than placing every node into a singleton cluster. For our result however, we do not need to view the problem through the lens of PBCC. 
%
%We treat bicluster deletion as an instance of general correlation clustering with the following weights:
\begin{equation}
(w_{ij}^+, w_{ij}^-) = 
\begin{cases} 
(0, 0) & \text{ if $i$ and $j$ are in the same bipartition of $G$} \\
(1, 0) & \text{ if $(i,j) \in E^+$} \\
(0,\infty) &  \text{ if $(i,j) \in E^-$}.
\end{cases}
\end{equation}
Above, $E^+$ and $E^-$ denote positive and negative edges between the two sides of the bipartite graph. To ensure no mistakes are made at negative edges, we add the constraint $x_{ij} = 1$ to BLP~\eqref{eq:gencc}, for every $(i,j) \in E^-$. The LP-relaxation of this problem is given by
\begin{equation}
\label{eq:bcdlp}
\begin{array}{lll} \text{minimize} & \sum_{(i,j) \in E^+} x_{ij} \\  \subjectto
& x_{ij} = 1 \hspace{1.85cm} \text{ for all $(i,j) \in E^-$} \\
& x_{ij} \leq x_{ik} + x_{jk} \hspace{.5cm} \text{ for all $i,j,k$} \\
& 0 \leq x_{ij} \leq 1 \hspace{1.1cm}  \text{ for all $i < j$}.
\end{array}
\end{equation}
\begin{theorem}
	\label{thm:4bicluster}
	Applying \textsc{GenRound} to LP~\eqref{eq:bcdlp}, with~$\delta = 1/2$, returns a $4$-approximation to bicluster deletion.
\end{theorem}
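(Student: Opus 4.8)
The plan is to invoke Theorem~\ref{thm:3pt1} with $\alpha = 4$ and $\delta = 1/2$, so it suffices to verify its two conditions for the rounded graph $\teg$ obtained by thresholding the optimal solution of LP~\eqref{eq:bcdlp} at $1/2$. The crucial preliminary observation is that the constraint $x_{ij} = 1$ forces every $E^-$ edge into $\tem$ and makes $c_{ij} = w_{ij}^+ x_{ij} + w_{ij}^-(1 - x_{ij}) = 0$ there, since the factor $1 - x_{ij} = 0$ annihilates the infinite negative weight. Likewise $c_{ij} = x_{ij}$ on every $E^+$ edge and $c_{ij} = 0$ on every same-side pair, whose weights are both zero.

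First I would dispatch Condition~(1). For $(i,j) \in \tep$ we need $w_{ij}^- \le 4 c_{ij}$: an $E^-$ edge can never lie in $\tep$ (it has $x_{ij} = 1 \ge 1/2$), and every other edge type has $w_{ij}^- = 0$, so this holds trivially. For $(i,j) \in \tem$ we need $w_{ij}^+ \le 4 c_{ij}$: the only edges with $w_{ij}^+ \neq 0$ are $E^+$ edges, for which $w_{ij}^+ = 1$ and $c_{ij} = x_{ij} \ge 1/2$, giving $4 c_{ij} \ge 2 \ge 1$.

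The substance of the proof is Condition~(2), the bad-triangle inequality. Consider a bad triangle on $(i,j,k)$ with positive edges $(i,j),(j,k) \in \tep$ and negative edge $(i,k) \in \tem$; its left-hand side is $w_{ij}^+ + w_{jk}^+ + w_{ik}^-$. The first step is to rule out that $(i,k)$ is an $E^-$ edge: since $x_{ij}, x_{jk} < 1/2$, the LP triangle inequality gives $x_{ik} \le x_{ij} + x_{jk} < 1$, contradicting the constraint $x_{ik} = 1$ that holds on $E^-$ edges. Hence $w_{ik}^- = 0$ and the left-hand side is just $w_{ij}^+ + w_{jk}^+$. I would then split on how many of the two positive edges are $E^+$ edges, each contributing weight $1$ (same-side edges contribute $0$). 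If neither is $E^+$ the inequality is trivial. If exactly one is, a bipartite-parity argument forces $(i,k)$ itself to be a cross edge, hence an $E^+$ edge since it is not $E^-$, so $c_{ik} = x_{ik} \ge 1/2$ and the right-hand side is at least $4 c_{ik} \ge 2$, dominating the left-hand side of $1$. If both are $E^+$, then $i$ and $k$ lie on the same side, so $c_{ik} = 0$; but the LP triangle inequality gives $c_{ij} + c_{jk} = x_{ij} + x_{jk} \ge x_{ik} \ge 1/2$, whence $4(c_{ij}+c_{jk}+c_{ik}) \ge 2$, again matching the left-hand side of $2$.

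The main obstacle, and the reason the constant is exactly $4$, is controlling the negative edge of a bad triangle in the presence of the infinite $E^-$ weight: the argument works only because the LP triangle inequality, sharpened by the $x_{ij}=1$ constraint, simultaneously forbids an $E^-$ edge from being the negative edge of any bad triangle and lower-bounds the relevant $c$-values by $1/2$. The two-$E^+$ case is tight, which is what pins the factor at $4$; verifying the bipartite-parity bookkeeping in each case is routine once this observation is in place.
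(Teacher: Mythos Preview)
Your proof is correct and follows essentially the same route as the paper's: both verify the two conditions of Theorem~\ref{thm:3pt1} by first using the constraint $x_{ij}=1$ on $E^-$ edges to exclude them from $\tep$ and from the negative edge of any bad triangle, and then handling the remaining bad-triangle cases via the LP triangle inequality $x_{ij}+x_{jk}\geq x_{ik}\geq 1/2$. Your case split (by how many of the two $\tep$ edges lie in $E^+$) is just a reparametrization of the paper's split (by which pair of $\{i,j,k\}$ shares a side), and the paper additionally states feasibility of the output (all clusters are bicliques) explicitly up front, whereas you obtain it implicitly from the same bad-triangle exclusion.
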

\begin{proof}
	First of all note that \textsc{GenRound} applied to LP~\eqref{eq:bcdlp} with $\delta = 1/2$ will indeed form only complete bicliques. Applying a pivot step around node $k$ will form a cluster $S$ in which $x_{ki}^{} < \delta = 1/2$ for every $i \in S$. For any two non-pivot nodes $i$ and $j$ in $S$, $x_{ij}^{} \leq x_{ki}^{} + x_{kj}^{} < 1$. This means that $(i,j) \notin E^-$, since the LP relaxation forces all negative edges to have distance one. It remains to check that the conditions of Theorem~\ref{thm:3pt1} are satisfied for $\alpha = 4$. 
	
	For the first condition, if $(i,j) \in \tep$, this means $x_{ij}^{} < 1/2 \implies (i,j) \in E^+$, which means $w_{ij}^- = 0$ so the inequality $w_{ij}^- \leq \alpha c_{ij}$ is trivially satisfied. If $(i,j) \in \tem \cap E^+$, then $w_{ij}^+ = 1$ and $c_{ij}^{} = x_{ij}^{}$, so 
	\[ w_{ij}^+ = 1 < 2 = 4 (1/2) \leq 4 x_{ij}^{} = \alpha x_{ij}^{}. \]
	If $(i,j) \in \tem \cap E^-$, then $w_{ij}^+ = 0$ and again the inequality is trivial. Thus, the first condition is satisfied for all cases.
	
	For the second condition, consider a bad triangle $(i,j,k)$ in $\teg$ with $(i,k) \in \tem \implies x_{ik}^{} \geq 1/2$. Since $(i,j)$ and $(j,k)$ are in $\tep$, $x_{ij}^{} < 1/2$ and $x_{jk}^{} < 1/2$, so $x_{ik}^{} \leq x_{ij}^{} + x_{jk}^{} < 1$. If $(i,j,k)$ are all on the same side of the graph in $G$, then $w_{ij}^+ + w_{jk}^+ + w_{ik}^- = 0$ and the inequality in condition two of Theorem~\ref{thm:3pt1} is trivial. If $i$ and $j$ are on the same side, but not $k$, then
	\[\alpha (c_{ij}^{} + c_{jk}^{} + c_{ik}^{}) = 4 (0 + x_{jk}^{} + x_{ik}^{})\geq 4x_{ik}^{} \geq 2 > 1 =  w_{ij}^+ + w_{jk}^+ + w_{ik}^-.\]
	If~$i$ and~$k$ are on the same side of the graph but not $j$ (which is symmetric to considering~$j,k$ together and~$i$ on the other side), then $w_{ij}^+ + w_{jk}^+ + w_{ik}^- = 2$ and
	\[	\alpha (c_{ij} + c_{jk} + c_{ik}) = 4 (x_{ij}^{} + x_{jk}^{} + 0)\geq 4x_{ik}^{} \geq 2 = w_{ij}^+ + w_{jk}^+ + w_{ik}^-.\]
	Since all the conditions of Theorem~\ref{thm:3pt1} hold in all cases, \textsc{GenRound} is a 4-approximation for bicluster deletion when~$\delta = 1/2$.
\end{proof}

\subsection{Generalized Results for PBCC}
We now turn to approximation algorithms for a wider range of parameter settings. In the remainder of the section, we specifically consider $\mu =
\mu_1 = \mu_2$. As we did for bicluster deletion, our goal is to find a threshold parameter $\delta$ and an approximation factor $\alpha$ such that
the two conditions of Theorem~\ref{thm:3pt1} hold. 
%The arguments are similar to those for Theorem~\ref{thm:4bicluster}. 
%The proof is straightforward, but requires checking many cases. We give a sketch in the appendix, and we will provide full details in an online full version of the paper should the manuscript be accepted.
To find the best choice of $\delta$ in different settings, we first set up a system of inequalities that are \emph{sufficient} to guarantee the assumptions of Theorem~\ref{thm:3pt1}. In these inequalities, $\mu$ and $\beta$ are treated as constants, and $\alpha$ and $\delta$ are variables we optimize over to obtain the best approximation results. We wish to find $\delta$ and $\alpha$ such that these sufficient constraints are satisfied and the approximation factor $\alpha$ is minimized. 

\paragraph{Sufficient constraints for first condition.}
The first condition of Theorem~\ref{thm:3pt1} requires that for all $(i,j) \in \tilde{E}^+$, we have $w_{ij}^- \leq \alpha c_{ij}^{}$, and for all $(i,j) \in \tem$, we have $w_{ij}^+ \leq \alpha c_{ij}^{}$. If $(i,j) \in E^+ \cap \tep$ or $(i,j) \in E^- \cap \tem$, then the left hand side of the inequality is zero and the inequality is trivially satisfied.

If $(i,j) \in E^+ \cap \tem$, then $x_{ij}^{} \geq \delta$, $w_{ij}^+ = (1-\beta)$, and $c_{ij}^{} = (1-\beta)x_{ij}^{} \geq (1-\beta) \delta$. Thus the inequality is satisfied as long as
\begin{equation}
\label{cond1a}
\alpha\delta \geq 1\,.
\end{equation}
On the other hand, if $(i,j) \in E^- \cap \tep$, then $x_{ij}^{} < \delta$, $w_{ij}^-$ is either $\mu$ or $\beta$, and $c_{ij}^{} = w_{ij}^-(1-x_{ij}^{}) \geq w_{ij}^- (1-\delta)$. In order for the inequality $w_{ij}^- \leq \alpha c_{ij}^{}$, is it sufficient to choose $\alpha$ and $\delta$ satisfying:
\begin{equation}
\label{cond1b}
\alpha(1-\delta) \geq 1\,.
\end{equation}
%We view inequalities~\eqref{cond1a} and~\eqref{cond1b} as constraints that must be satisfied by a

\paragraph{Sufficient constraints for second condition.}
The second condition is defined for a triangle $(i,j,k)$ with $(i,k) \in \tem$ and $(i,j), (j,k) \in \tep$. We refer to this as a ``bad triangle,'' since at least one of the edges must be violated by any clustering. The requirement is:
\begin{equation}
\label{eq:triplet}
L = w_{ij}^+ + w_{jk}^+ + w_{ik}^- \leq \alpha (  c_{ij}^{} + c_{jk}^{} + c_{ik}^{} ) = R.
\end{equation}

Following the approach of Ailon et al.~\cite{Ailon2011bcc} for standard BCC, and our 4-approximation for bicluster deletion, the analysis is split into three cases:
\begin{itemize}
	\item Case 1: $\{i,j\}$ are on the same side of $G$, but not $k$.
	\item Case 2: $\{i,k\}$ are on the same side of $G$, but not $k$.
	\item Case 3: $i$, $j$, and $k$ are all on the same side of $G$.
\end{itemize} 
For Case 3, we know all edges in the triangle are negative in $G$, so there are no subcases to consider. However, Case 1 and Case 2 both require we consider four different subcases, since they both involve two edges crossing from one side to $G$ to the other, which could be positive or negative. Figure~\ref{fig:trianglecases} illustrates all the ways a triangle in $G$ can be mapped to a bad triangle in $\tilde{G}$. 
\begin{figure}[t]
	\centering
	\includegraphics[width=\linewidth]{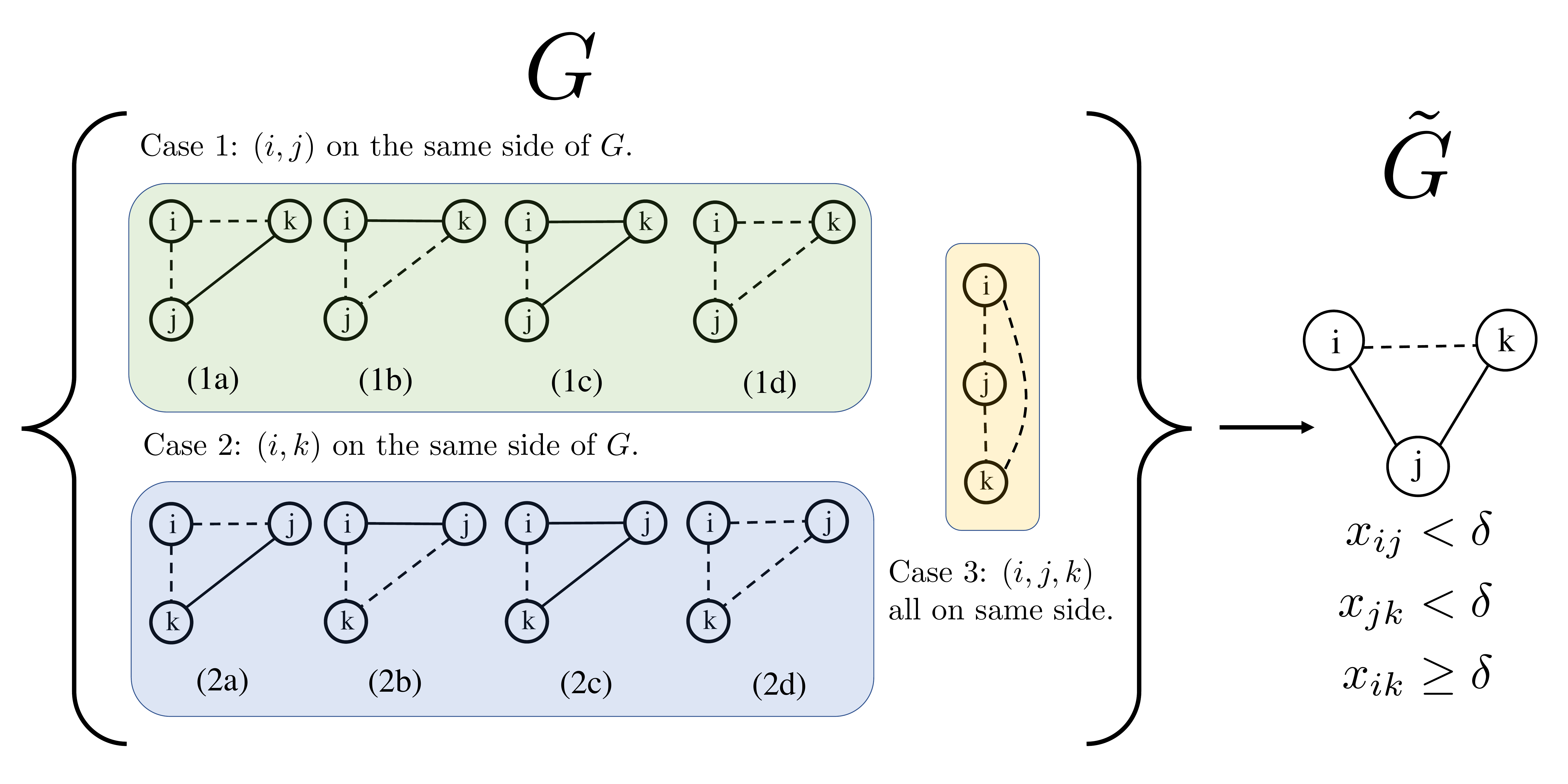}
	\caption{In searching for the best threshold parameter $\delta$, we consider nine different types of triangles that could be mapped to a bad triangle in $\teg$. We will handle inequality~\eqref{eq:triplet} differently depending on the case.}
	\label{fig:trianglecases}
\end{figure}

In Table~\ref{tab:pieces}, we display the value of $L$, and a lower bound on $c_{ij}^{} + c_{jk}^{} + c_{ik}^{}$ for each bad triangle displayed in Figure~\ref{fig:trianglecases}. Let $f_t(\delta)$ denote the lower bound determined for bad triangle of type $t$. 
Full details for computing $L$ and deriving these bounds are presented in the appendix. Once we have such a lower bound $\alpha f_t(\delta) \leq R$ for each type of bad triangle, in order to show that the second condition of Theorem~\ref{thm:3pt1} is always satisfied, it suffices to prove
\begin{equation}
\label{eq:badtri}
L \leq \alpha f_t(\delta)
\end{equation} 
for every bad triangle type $t$.
\begin{table}[t]
	\caption{
		We compute $L = w_{ij}^+ + w_{jk}^+ + w_{ik}^-$ and lower bound on $R/\alpha$ where $R =\alpha(c_{ij}^{} + c_{jk}^{} + c_{ik}^{})$ for each type of bad triangle displayed in Figure~\ref{fig:trianglecases}. Note that the second condition of Theorem~\ref{thm:3pt1} is to check that $L \leq R$ in all cases. For Case 1a, we include two bounds, one that works for any $\beta$, and another bound that is tighter when $\beta \geq 1/2$. 
		%Case 1c for bad triangles is a trivial inequality, and Case 2b is symmetric to Case 2a, so there is no new inequality that must be satisfied.
	}
	\label{tab:pieces}
	\centering
	\renewcommand{\arraystretch}{1.5}
	\begin{tabular}{l l l l}
		\toprule
		\textbf{Case} & $L$ & Bound $ f(\delta) \leq R/\alpha$ \\
		\midrule 
		Case 1a (any $\beta$) & 1& $\mu(1-\delta) +\beta(1-\delta)$\\
		Case 1a ($\beta \geq 1/2$) & 1& $\mu(1-\delta) +\beta + \delta (1-3\beta)$\\
		Case 1b & $1-\beta$& $\mu(1-\delta) + \delta(1-\beta)$\\
		Case 1c & 0 & $ \mu(1-\delta) + \delta (1-\beta)$\\
		Case 1d & $\beta$ &  $\mu(1-\delta) + \beta(2-3\delta)$\\
		Cases 2a, 2b & $(1-\beta) + \mu$ & $\beta(1-\delta) + \mu(1-2\delta)$\\
		Case 2c & $2(1-\beta) + \mu$ & $(1-\beta)\delta+ \mu(1-2\delta)$\\
		Case 2d & $\mu$ & $ 2\beta (1-\delta) +\mu(1-2\delta)$\\
		Case 3 & $\mu$ & $ \mu(3-4\delta)$\\
		\bottomrule
	\end{tabular}
\end{table}

\paragraph{An approximation for $\mu = 0$}
Ailon et al.~\cite{Ailon2011bcc} proved a $4$-approximation for  unweighted bipartite correlation clustering, which is equivalent to PBCC with $\mu =
0$ and  $\beta = 1/2$.
We show how to select~$\delta$ in \textsc{GenRound}
so that not only can we recover this same approximation guarantee when $\mu = 0$ and $\beta = 1/2$, but also obtain guarantees for all $\beta \in \left[ \frac{1}{2}, 1 \right)$. 
\begin{theorem}
	\label{thm:5orbetter}
	When $\mu= \mu_1 = \mu_2 = 0$ and $\beta \geq \frac12$, Algorithm~\ref{alg:genround} with $\delta = \frac{2\beta}{6\beta -1}$ returns a $(6-1/\beta)$-approximation for PBCC.
\end{theorem}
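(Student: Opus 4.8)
The plan is to invoke Theorem~\ref{thm:3pt1} with $\alpha = 6 - 1/\beta = (6\beta-1)/\beta$ and the stated threshold $\delta = 2\beta/(6\beta-1)$, so that \textsc{GenRound} applied to the LP-relaxation of~\eqref{eq:pbcc} with $\mu = 0$ yields the claimed approximation. Since $\mu_1 = \mu_2 = 0$, every same-side pair has $w_{ij}^+ = w_{ij}^- = 0$ and contributes nothing to either condition; the only edges that matter are the cross-side positive edges of weight $1-\beta$ and the cross-side negative (missing) edges of weight $\beta$. Thus the whole argument reduces to verifying the two conditions of Theorem~\ref{thm:3pt1} for these cross-side weights, and I would do this by checking the sufficient inequalities already assembled in the text and in Table~\ref{tab:pieces} after setting $\mu = 0$.

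For the first condition I would simply substitute into the sufficient constraints~\eqref{cond1a} and~\eqref{cond1b}. A direct computation gives $\alpha\delta = 2$ and $\alpha(1-\delta) = 4 - 1/\beta$, both of which are at least $1$ whenever $\beta \geq 1/2$ (indeed both are at least $2$). Hence the first condition holds.

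For the second condition I would run through the bad-triangle types of Figure~\ref{fig:trianglecases}, using the values of $L$ and the lower bounds $f_t(\delta)$ tabulated in Table~\ref{tab:pieces} with $\mu = 0$, and check $L \leq \alpha f_t(\delta)$ as required by~\eqref{eq:badtri}. With $\mu = 0$ the types Case 1c, Case 2d, and Case 3 have $L = 0$ and are immediate. The parameters are in fact chosen to make two of the remaining constraints tight: Case 2c requires $\alpha\delta \geq 2$, and the sharper $\beta \geq 1/2$ bound for Case 1a requires $\alpha(\beta + \delta(1-3\beta)) \geq 1$. Imposing equality in both, i.e.\ solving $\alpha\delta = 2$ together with $\alpha\beta + 2(1-3\beta) = 1$, yields exactly $\alpha = 6 - 1/\beta$ and $\delta = 2\beta/(6\beta-1)$. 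It then remains to verify the slack cases: Case 1b reduces to $1-\beta \leq 2(1-\beta)$; Case 1d reduces to $\beta \leq 6\beta - 2$; and Cases 2a, 2b reduce to $1-\beta \leq 4\beta - 1$. Each of these holds for all $\beta \geq 2/5$, and hence certainly for $\beta \geq 1/2$.

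The bookkeeping across the nine triangle types is the main source of friction, but because Table~\ref{tab:pieces} already supplies each $L$ and each $f_t(\delta)$, every case collapses to a one-line linear inequality in $\beta$ once $\mu = 0$ is imposed. The one genuine subtlety is that the all-$\beta$ bound for Case 1a is too weak here, so the proof must use the tighter Case 1a bound valid for $\beta \geq 1/2$; it is precisely this bound, paired with Case 2c, that pins down the optimal $(\alpha, \delta)$ and explains the appearance of the factor $6 - 1/\beta$.
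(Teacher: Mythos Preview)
Your proposal is correct and follows essentially the same route as the paper: invoke Theorem~\ref{thm:3pt1} with $\alpha = 6-1/\beta$, $\delta = 2\beta/(6\beta-1)$, set $\mu=0$ in the sufficient constraints~\eqref{cond1a},~\eqref{cond1b} and in the bad-triangle bounds of Table~\ref{tab:pieces}, and verify each resulting linear inequality in $\beta$. The paper distills the nontrivial cases to four inequalities (the ones you identify as Cases 1a, 1d, 2c together with~\eqref{cond1b}) and checks them directly; your case-by-case pass and your observation that the tight constraints (Case 1a with the $\beta\geq 1/2$ bound, and Case 2c) determine $(\alpha,\delta)$ are a welcome addition but not a departure.
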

\begin{proof}
	When $\mu = 0$, the system of inequalities  in Table~\ref{tab:pieces} greatly simplifies to the following set of conditions:
	%. If we use the bounds in Table~\ref{tab:pieces} to determine constraints that must hold, and include the constraints we derived in~\eqref{cond1a} and~\eqref{cond1b} for satisfying the first condition of Theorem~\ref{thm:3pt1}, we obtain the following set of inequalities:
	\begin{align}
	1 & \leq \alpha ( 1-\delta) \label{eq:final1} \\
	1 & \leq \alpha [ \beta + \delta(1-3\beta)] \label{eq:final2}\\
	1 & \leq \alpha (2-3\delta)\label{eq:final3} \\
	2& \leq \alpha\delta.  \label{eq:final4}
	\end{align}
	The first of these is a repeat of inequality~\eqref{cond1b}, and the remaining three are derived from Case 1a (the second bound designed specifically for $\beta \geq 1/2$), Case 1d, and Case 2c from Table~\ref{tab:pieces}. One can check to see that all other inequalities we must satisfy are less strict and can be subsumed into one of these four bounds. 
	
	For inequality~\eqref{eq:final1}:
	\begin{align*}
	\alpha (1-\delta) &= (6-1/\beta)\left(1- \frac{2\beta}{6\beta - 1} \right) = \left(\frac{6\beta -1 }{\beta}\right) \left( \frac{4\beta - 1}{ 6\beta - 1}\right)\\
	& = \frac{4\beta -1}{\beta} = 4 - \frac{1}{\beta} \geq 2 > 1.
	\end{align*}
	
	For inequality~\eqref{eq:final2}:
	\begin{align*}
	\alpha [ \beta + \delta(1-3\beta)] &= \frac{6\beta-1}{\beta} \left(\beta + \frac{2\beta}{6\beta - 1}(1-3\beta)  \right) \\
	& = (6\beta - 1) + 2(1-3\beta) = 1.
	\end{align*}
	
	For inequality~\eqref{eq:final3}:
	\begin{align*}
	\alpha (2-3\delta) &= \frac{6\beta-1}{\beta} \left(2- \frac{6\beta}{6\beta - 1} \right) \\
	& = 12 - \frac{2}{\beta} - 6 = 6 - \frac{2}{\beta} \geq 2.
	\end{align*}
	
	For inequality~\eqref{eq:final4}:
	\begin{align*}
	\alpha\delta = \frac{6\beta-1}{\beta} \left( \frac{2\beta}{6\beta - 1} \right) = 2.
	\end{align*}
	All cases are satisfied, and the proof is complete.
\end{proof}

\paragraph{A 5-approx for a generalized parameter regime}
Considering a more general parameter regime, where $\mu = \mu_1 = \mu_2 \in [0,1]$, we obtain a $5$-approximation for all $\beta \geq 1/2$.
\begin{theorem}
	\label{thm:5app}
	When $\mu_1 = \mu_2$ and $\beta \geq \frac12$, Algorithm~\ref{alg:genround}
with $\delta = 2/5$ returns a $5$-approximation to PBCC.
\end{theorem}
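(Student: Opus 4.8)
The plan is to invoke Theorem~\ref{thm:3pt1} with approximation factor $\alpha = 5$ and to verify that its two conditions hold when \textsc{GenRound} is run with threshold $\delta = 2/5$. As in the proof of Theorem~\ref{thm:5orbetter}, the first condition reduces to the sufficient inequalities~\eqref{cond1a} and~\eqref{cond1b}, namely $\alpha\delta \geq 1$ and $\alpha(1-\delta) \geq 1$. Substituting $\delta = 2/5$ and $\alpha = 5$ yields $2 \geq 1$ and $3 \geq 1$, so the first condition is immediate.

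The bulk of the work lies in the second condition, which by~\eqref{eq:badtri} requires $L \leq \alpha f_t(\delta)$ for each of the nine bad-triangle types in Table~\ref{tab:pieces}. I would substitute $\delta = 2/5$ and $\alpha = 5$ into each row so that every bound $\alpha f_t(2/5)$ collapses to a simple linear expression in $\mu$ and $\beta$ that must dominate the corresponding value of $L$. Since $\mu$ and $\beta$ are treated as fixed constants in $[0,1]$ under the standing assumption $\beta \geq 1/2$, each resulting inequality becomes a short verification.

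I expect the binding constraint---and the reason the threshold is exactly $2/5$---to be Case 2c. There $L = 2(1-\beta) + \mu$ and $\alpha f_{2c}(2/5) = 2(1-\beta) + \mu$, so the inequality holds with equality; a larger $\delta$ would violate it. For the remaining rows the inequalities have slack: Cases~1a and~1b reduce to $\beta \leq 1 + 3\mu$ (true since $\beta \leq 1$); Cases~2a and~2b reduce to $\beta \geq 1/4$ (implied by $\beta \geq 1/2$); and Cases~1c, 1d, 2d, and~3 hold trivially because their left-hand sides are at most a single nonnegative term while the right-hand bound is a strictly larger nonnegative combination of $\mu$ and $\beta$. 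For Case~1a either tabulated bound works; the ``any $\beta$'' bound gives $3(\mu+\beta) \geq 1$, which is clear under $\beta \geq 1/2$.

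The only genuine obstacle is bookkeeping: one must cover all nine triangle types (the four subcases each of Cases~1 and~2 plus the single Case~3) and confirm that no row forces $\alpha > 5$ for some admissible $(\mu,\beta)$. Once Case~2c is seen to be tight and every other row is checked to have slack under $\beta \geq 1/2$ and $\mu \leq 1$, both conditions of Theorem~\ref{thm:3pt1} hold and \textsc{GenRound} returns a $5$-approximation, completing the proof.
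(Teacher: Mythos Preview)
Your proposal is correct and follows essentially the same route as the paper: invoke Theorem~\ref{thm:3pt1} with $\alpha=5$, $\delta=2/5$, dispatch the first condition via~\eqref{cond1a}--\eqref{cond1b}, and verify~\eqref{eq:badtri} for every row of Table~\ref{tab:pieces}. The paper organizes the case work slightly differently---it groups the nine inequalities, drops the common $\alpha\mu(1-\delta)$ or $\alpha\mu(1-2\delta)$ terms, and checks a shorter list of stricter bounds---whereas you substitute $\delta=2/5$ directly into each row; both arrive at the same verifications, and your observation that Case~2c is the tight constraint is exactly what drives the choice $\delta=2/5$.
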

\begin{proof}
	In order to prove the result, it is sufficient to show that the following set of inequalities holds when $\delta = 2/5$ and $\alpha = 5$:
	\begin{align}
	1 &\leq 	\alpha\delta \label{eq:first}\\
	1& \leq \alpha(1-\delta) \\
	1 & \leq 	\alpha [\mu(1-\delta) +\beta(1-\delta)] \label{eq:third} \\
	(1-\beta) &\leq \alpha[ \mu(1-\delta) + \delta(1-\beta) ] \label{eq:fourth}\\
	\beta & \leq \alpha [ \mu(1-\delta)+  \beta (2-3\delta) ] \label{eq:fifth}\\
	(1-\beta) + \mu & \leq \alpha [ \beta(1-\delta) + \mu(1-2\delta)] \label{eq:sixth}\\
	2(1-\beta) + \mu & \leq \alpha[(1-\beta)\delta + \mu(1-2\delta)] \label{eq:seventh} \\
	\mu &\leq \alpha [2\beta (1-\delta) +\mu(1-2\delta)] \label{eq:eighth}\\
	1 &\leq  \alpha(3-4\delta).
	\end{align}
	These are taken from~\eqref{cond1a},~\eqref{cond1b}, and all inequalities of the form~\eqref{eq:badtri} obtained from the different cases in Table~\ref{tab:pieces}. The first two inequalities and the last inequality are easy to show just by plugging in $\alpha = 5$ and $\delta = 2/5$.
	%  \[ \alpha \delta = 5(2/5) = 2 > 1 \text{ and }  \alpha (1- \delta) = 5(3/5) > 1.  \]
	For inequalities~\eqref{eq:third},~\eqref{eq:fourth}, and~\eqref{eq:fifth}, we will drop the term $\alpha\mu(1-\delta)$ on the right hand side and prove a more simple set of inequalities that are more strict:
	\begin{align}
	1 & \leq 	\alpha \beta (1-\delta) \label{eq:third2}\\
	1&\leq \alpha\delta \label{eq:fourth2}\\
	1 & \leq \alpha (2-3\delta). \label{eq:fifth2}
	\end{align}
	Inequalities~\eqref{eq:fourth2} and~\eqref{eq:fifth2} follow directly from plugging in $\alpha = 5$ and $\delta = 2/5$. For inequality~\eqref{eq:third2}, note that
	\begin{align*}
	\alpha \beta (1-\delta) = 5 \beta \frac{3}{5} = 3\beta \geq \frac{3}{2} > 1.
	\end{align*}
	
	Finally, note that inequalities~\eqref{eq:sixth},~\eqref{eq:seventh}, and~\eqref{eq:eighth} all have a term $\mu$ on the left and a term $\mu (1-2\delta)$ on the right. So all of these inequalities will be satisfied if we prove the more strict conditions 
	\begin{align}
	\mu &\leq \alpha\mu (1-2\delta) \label{eq:mupiece}\\
	(1-\beta) & \leq \alpha \beta(1-\delta) \label{eq:sixth2}\\
	2(1-\beta)& \leq \alpha (1-\beta)\delta \label{eq:seventh2}
	\end{align}
	Note that~\eqref{eq:mupiece} holds tightly for $\alpha = 5$ and $\delta = 2/5$. 
	Inequality~\eqref{eq:sixth2} is less strict that inequality~\eqref{eq:fourth2}, which we already proved.
	%	\begin{align*}
	%	\alpha \beta(1-\delta) = 5 \beta \frac{3}{5} = 3\beta \geq \frac{3}{2} \geq (1-\beta).
	%	\end{align*}
	Finally, after canceling $(1-\beta)$ from both sides,~\eqref{eq:seventh2} becomes $2 \leq \alpha \delta$, which holds for our choices of $\alpha$ and $\delta$. 
	Thus, all necessary constraints are satisfied and we know that Algorithm~\ref{alg:genround} will yield a 5-approximate solution for any $\mu$ and whenever $\beta \geq 1/2$.
\end{proof}
%The proofs for Theorems~\ref{thm:5orbetter} and~\ref{thm:5app} require checking many cases:
%a detailed proof sketch for both appears in the appendix.
%Should this manuscript be accepted, we will provide all details in an online full version of the paper.
%should the manuscript be accepted.

\subsection{Modularity Connections and Heuristics} 
%\david{This section is a little terse....} 
Returning to the \textsc{HyperLam} objective, applying our weighted clique expansion and introducing a negative edge of weight $\lambda d_i d_j$ for
node pair $(i,j)$ is equivalent to solving a weighted variant of the LambdaCC graph clustering objective~\cite{Veldt:2018:CCF:3178876.3186110}. Since
LambdaCC is equivalent to a generalization of modularity with a resolution parameter~\cite{Veldt:2018:CCF:3178876.3186110,newman2004modularity}, we
can also approximately optimize the \textsc{HyperLam} objective by applying our weighted clique expansion and then running heuristic algorithms for
modularity such as the Louvain algorithm~\cite{Blondel_2008} or, more appropriately, generalizations of Louvain with a resolution parameter~\cite{JeubGenLouvain}. A similar approach will also work for the star expansion: we set the weight of a node in $V$ to be its hyperedge degree $w_v = d_v$, and the weight of an auxiliary node $v_e$ (obtained from expanding a hyperedge) to be $w_{v_e} = 0$. This also corresponds to a weighted variant of LambdaCC, since each pair of nodes $(i,j)$ in the graph share a negative edge of weight $\lambda w_i w_j$. In many cases this weight will be zero, but we can still apply generalized Louvain-style heuristics to optimize the objective.

%This approach generalizes a previous technique for clustering hypergraphs by applying the same type of clique expansion and approximating the modularity objective on the resulting graph~\cite{kumar2020new}. 
%However, we note that our original hypergraph objective~\eqref{eq:hocc} nevertheless differs from generalizations of modularity defined directly for hypergraphs~\cite{kaminski2019clustering}.

Kumar et al.~\cite{kumar2020new} previously considered a modularity-based approach for hypergraph clustering based on the same type of clique expansion. These authors applied the same weight $1/(|e|-1)$ to each edge in a clique expansion of a hypergraph $|e|$, as this preserves the degree distribution of nodes in the original hypergraph.
They then considered applying the modularity objective~\cite{newman2004modularity} to the resulting graph. Their approach corresponds to applying a weighted clique expansion to an instance of \textsc{HyperLam}, and setting $\lambda = 1/(\vol_\mathcal{H}(V))$. Thus, this approach can be viewed as a special case of our hyperedge expansion procedure for \textsc{HyperLam}. The connection to correlation clustering we show, along with the resulting approximation algorithms for the all-or-nothing hypergraph cut, provide further theoretical motivation for this choice of weighted clique expansion. Despite this connection to a previous clique expansion technique for modularity, we note that our original hypergraph objective~\eqref{eq:hocc} nevertheless differs from generalizations of modularity defined directly for hypergraphs~\cite{kaminski2019clustering}, as opposed to modularity objectives applied to clique expansions of hypergraphs.

\section{Related Work}
To anchor our work, we highlight related results on algorithms for correlation clustering, techniques for parametric clustering in standard graphs, and recent results on clustering hypergraphs. 

\textbf{Correlation Clustering} Bansal et al.~\cite{BansalBlumChawla2004} first introduced the problem of correlation clustering, providing a constant factor approximation for the complete unweighted case. 
Amit was the first to consider the problem in the bipartite setting~\cite{amit2004bicluster}, providing an $11$-approximation for the complete
unweighted setting. Later, Ailon et al.~\cite{Ailon2011bcc} presented a $4$-approximation. Most recently, Chawla et
al.~\cite{ChawlaMakarychevSchrammEtAl2015} improved the best approximation factor to~$3$.

Higher-order correlation clustering was first considered by Kim et al.~\cite{kim2011highcc} in the content of image segmentation. Li et
al.~\cite{Li2017motifcc} were the first to develop approximation algorithms for the complete 3-uniform case, giving a 9-approximation. We later gave a $4(k-1)$ approximation for the $k$-uniform setting, which was then improved to $2k$ by Li et al.~\cite{Li2019motif}. For weighted hypergraphs, Fukunaga~\cite{fukunga2018highcc} presented an $O(k \log n)$ approximation algorithm, where $k$ is the maximum size of negative hyperedges.

\textbf{Parametric Graph Clustering}
Our introduction of the LambdaCC framework situates graph clustering within correlation clustering~\cite{Veldt:2018:CCF:3178876.3186110}.
We proved equivalence results with modularity, normalized cut, and sparsest cut, and gave a $3$-approximation when $\lambda \geq 1/2$, based on LP-rounding. We were later able to show that the LP relaxation has an integrality gap of $O(\log n)$ for some small values of $\lambda$~\cite{Veldt2018ccgen}. LambdaCC is in turn related to other graph parametric clustering objectives, such as stability~\cite{Delvenne2010stabilitypnas}, various Potts models~\cite{ReichardtBornholdt2004,traag2011narrow}, and generalizations of modularity~\cite{Arenas2008analysis}. 

\textbf{Hypergraph Clustering}
Different higher-order generalizations of modularity have been previously
developed~\cite{kumar2020new,kaminski2019clustering}, along with higher-order variants of
conductance~\cite{BensonGleichLeskovec2016} and normalized cut~\cite{panli2017inhomogeneous,Zhou2006learning}. In hypergraph clustering, the most
common penalty for a cut hyperedge is the weight of that hyperedge, regardless of how the hyperedge is cut. However, other penalties have also been
considered in the context of hypergraph partitioning and clustering~\cite{Catalyurek99hypergraph,panli_submodular, panli2017inhomogeneous}. A more comprehensive overview of generalized hypergraph cut functions is included in recent work by one of the authors~\cite{veldt2020hypergraph}.

\section{Experiments}
\label{sec:experiments}
We demonstrate our parametric objectives and algorithms in analyzing an assortment of different types of datasets. Our primary goal is to highlight the diversity of results we can achieve. We begin by running our approximation algorithms for PBCC on several bipartite datasets to illustrate the algorithmic performance and output in different parameter regimes. We then apply the \textsc{HyperLam} framework to motif clustering. Finally, we apply our framework to detect product categories in an Amazon product review hypergraph.

%Our experiments were almost exclusively performed on a computer with 4 16-core Intel Xeon E7-8867 v3 processors.

\textbf{Implementation Details.}
We implement our algorithms in Julia, using Gurobi to solve LP relaxations. Code for all algorithms and experiments are
available online at~\url{https://github.com/nveldt/ParamCC}. We focus on studying the differences among the objective
functions rather than optimizing implementations. Our motif clustering experiments were run on a laptop with 8GB of RAM.
All other experiments were run on a larger machine with four~16-core Intel Xeon E7-8867 v3 processors.
Running large instances with Louvain-style algorithms was not a bottleneck and these always finished in a few minutes or
less. On the bipartite graphs we consider, running our PBCC algorithms typically took a few seconds or a few minutes.
Solving the correlation clustering LP relaxation for larger graphs is often very expensive; this is, however, an active research area~\cite{ruggles2020full,veldt2019simods,brickell2008metricnearness,sonthalia2020project} 
and solvers have been produced for around~20,000-node graphs. This leaves us with a theory/practice gap between the effective Louvain-based heuristics and more principled approximations that we intend to study in the future.

\subsection{PBCC on Real Bipartite Graphs}
We run our PBCC approximation algorithms on five bipartite graphs constructed from real data\footnote{Cities: \url{https://www.lboro.ac.uk/gawc/datasets/da6.html}; Newsgroups:~\url{www.cs.nyu.edu/~roweis/data/}; Zoo: \url{https://archive.ics.uci.edu/ml/datasets/zoo}. Amazon (5-core):~\url{https://nijianmo.github.io/amazon/index.html}.}, with a range of parameter settings. 
\setlength{\leftmargini}{0pt} % this may screw things up lateer, but I dont' think so.... 
\begin{compactitem}
\item The \emph{Cities} graph encodes which set of 46 global firms (nodes on side $V_1$) have offices in 55 different major cities (nodes on side $V_2$).  
 \item\emph{Newgroups100} is made up of a set of 100 documents ($V_1$) and 100 words ($V_2$); edges indicate words used in each document. We have extracted a random subset of 100 documents (25 from each of four categories: $\textit{sci}*$, $\textit{comp}*$, $\textit{rec}*$, and $\textit{talk}*$) from a larger dataset,
%\footnote{\url{www.cs.nyu.edu/~roweis/data/}}, 
often used as a benchmark for \emph{hypergraph} clustering~\cite{Zhou2006learning,panli_submodular,Hein2013}. 
\item The \emph{Zoo} dataset encodes 100 animals and their associations with~15 different binary attributes (e.g., ``hair'', ``feathers'', ``eggs''). 
\item The last two bipartite graphs are constructed from reviewers on Amazon ($V_1$) that have reviewed products ($V_2$) within  certain categories~\cite{ni-etal-2019-justifying}. The \emph{Fashion} category has 404 reviewers and 31 products, and \emph{Appliances} has 44 reviewers for 48 products.
\end{compactitem}
 \begin{figure}[t]
	\centering
	\subfloat[$\mu =0$, $0 \leq \beta \leq 1$ \label{fig:beta}] 
	{\includegraphics[width=.47\linewidth]{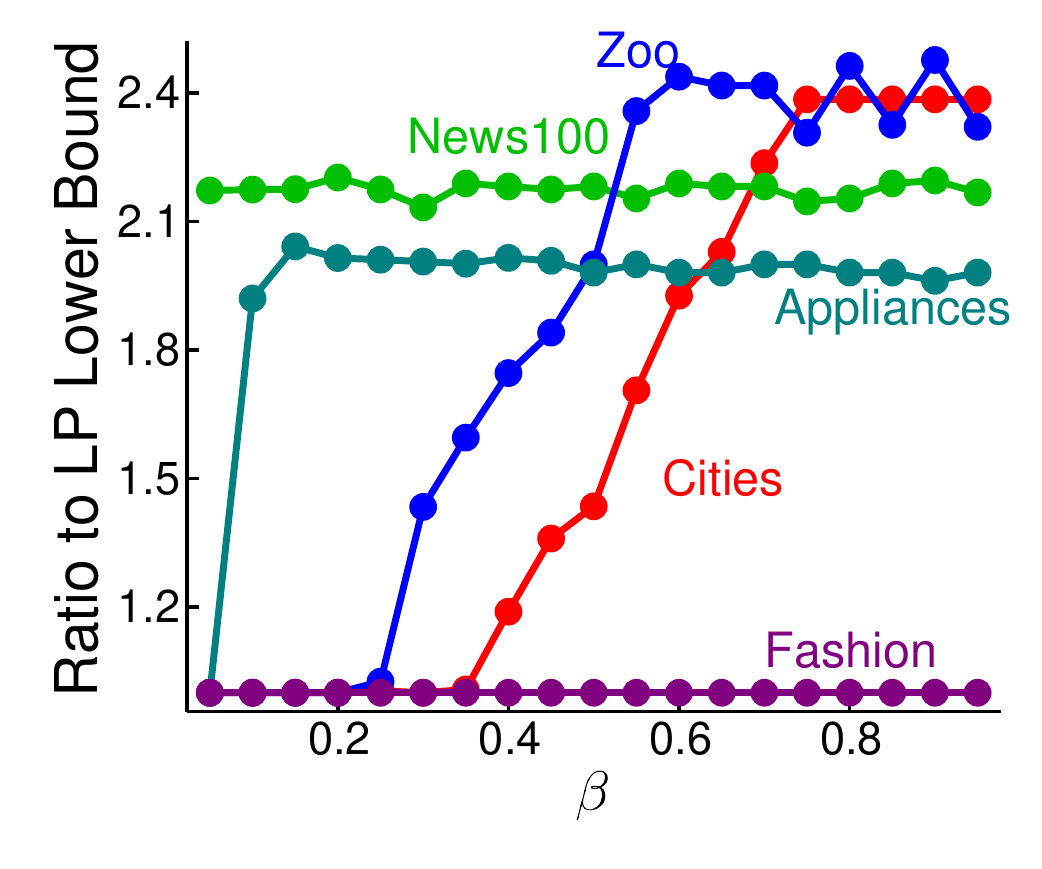}}\hfill
	\subfloat[$\beta = 0.5$, $0 \leq \mu\leq 0.2$ \label{fig:mu}] 
	{\includegraphics[width=.47\linewidth]{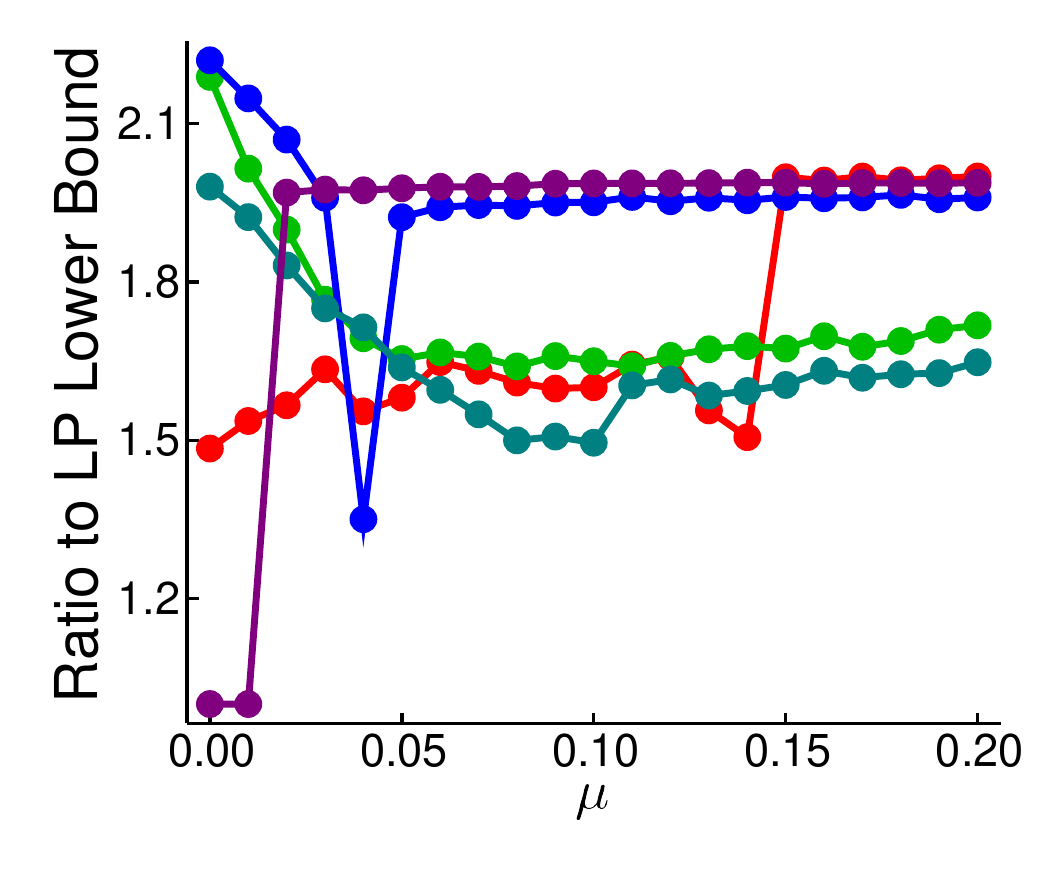}}
		\vspace{-.5\baselineskip}
	\caption{A posteriori approximation ratios for running our LP-based PBCC algorithms on  real-world bipartite graphs.
	}
	\label{fig:pbcc-plots}
	\vspace{-\baselineskip}
\end{figure}
Figure~\ref{fig:pbcc-plots} displays a posteriori approximation ratios for our method (objective score divided by LP lower bound), first for $\mu_1 = \mu_2 = 0$ and $\beta \in [0,1]$, and then for $\beta = 1/2$ and $\mu = \mu_1 = \mu_2 \in [0, 0.2]$. After solving the LP relaxation for each $(\mu, \beta)$ pair, we try rounding with~$\delta$ values from $0.05$ to $0.95$ in increments of $0.05$, taking the result with the best objective score, since the rounding procedure is much faster than the initial LP solve. We note that the approximation factor curve varies significantly from dataset to dataset. However, in all cases we obtain much better approximation factors than the ones given in Table~\ref{tab:pbcc}, even for $\beta$ values where our algorithms have no formal guarantees. In certain regimes we also observe abrupt changes in approximation factors, e.g., for \emph{Fashion} when $\beta = 0.5$ and $\mu$ is near zero (Figure~\ref{fig:mu}). We also tested $\mu > 0.2$ when $\beta = 0.5$. In this parameter regime, the problem is nearly the same as bipartite matching, though our LP-based approach only provides a posteriori guarantees of around a factor 2. This motivates the question of what other approximation algorithms might perform better when the problem is ``almost'' bipartite matching.
% For all graphs but one, solving the LP took between a few seconds and three minutes, depending on parameters. The exception is \emph{Amazon Fashion}, for which it took several hours to solve the LP for the smallest values of $\mu$ (though under an hour in most cases). This highlights that the difficulty of solving our objective may vary significantly depending on the parameter regime.

\subsection{\textsc{HyperLam} for Motif Clustering}
\textsc{HyperLam} can detect motif-rich clusters at different resolutions in a graph. In motif clustering, a small, frequently repeated subgraph (a motif) is identified, and each motif instance is associated with a hyperedge~\cite{BensonGleichLeskovec2016,Arenas_2008,Tsourakakis:2017:SMG:3038912.3052653,panli2017inhomogeneous}. Applying a hypergraph clustering technique penalizes the number of cut motifs, rather than just the cut edges. This encourages keeping whole motifs inside clusters. 

Triangles are known to be important motifs for identifying community structure in networks~\cite{Tsourakakis:2017:SMG:3038912.3052653,klymko2014using}. We therefore apply the \textsc{HyperLam} framework to cluster the Email-EU dataset~\cite{yin2017local,leskovec2007graph} based on triangles. Each edge in the graph (which we treat as undirected) represents an email sent between members of a European research institution. A  metadata label indicating each researcher's department comes with each node. 
%We try several techniques to find clusters at different resolutions in the graph. 

To find clusters at different resolutions in the graph, we approximate the \textsc{HyperLam} objective by first applying
a clique expansion based on triangle motifs. Since the motif has three nodes, the all-or-nothing cut is the same as the
linear penalty, and the clique expansion perfectly models both. We cluster the resulting \emph{weighted} graph with a
weighted version of Lambda-Louvain~\cite{Veldt:2018:CCF:3178876.3186110}, which makes greedy local node moves similar to
the Louvain method~\cite{Blondel_2008}, but optimizes a different objective. We compare against running Lambda-Louvain
on the original graph. We also compare against standard graph algorithms Metis~\cite{karypis1998metis} and
Graclus~\cite{Dhillon-2007-graclus}, varying the number of clusters $k$, and recursive spectral partitioning, for a
range of different minimum cluster sizes $m_\mathit{size}$. We test these last three methods on both the original graph
and clique-expanded graph, but show results only for the clique-expanded graph, as this leads to the best outcome for
these methods. Finally, we run hMetis (a hypergraph variant of Metis) on the hypergraph formed by associating motifs
with hyperedges, varying cluster number,~$k$.

After forming multiple clusterings with each method for many parameter values ($k$, $m_\mathit{size}$, or $\lambda$), we measure the Adjusted Rand Index score between each clustering and the known department metadata labels. Scores for each cluster size are displayed in Figure~\ref{fig:email}. Although the department labels do not exactly match with community structure in the network, there is a strong correlation between the two, and the higher ARI scores obtained by running \textsc{HyperLam} with the triangle motif indicate that our method is best able to detect this relationship. 
	
	\begin{figure}[t]
		\centering
		\subfloat[Email, ARI scores \label{fig:email}] 
		{\includegraphics[width=.45\linewidth]{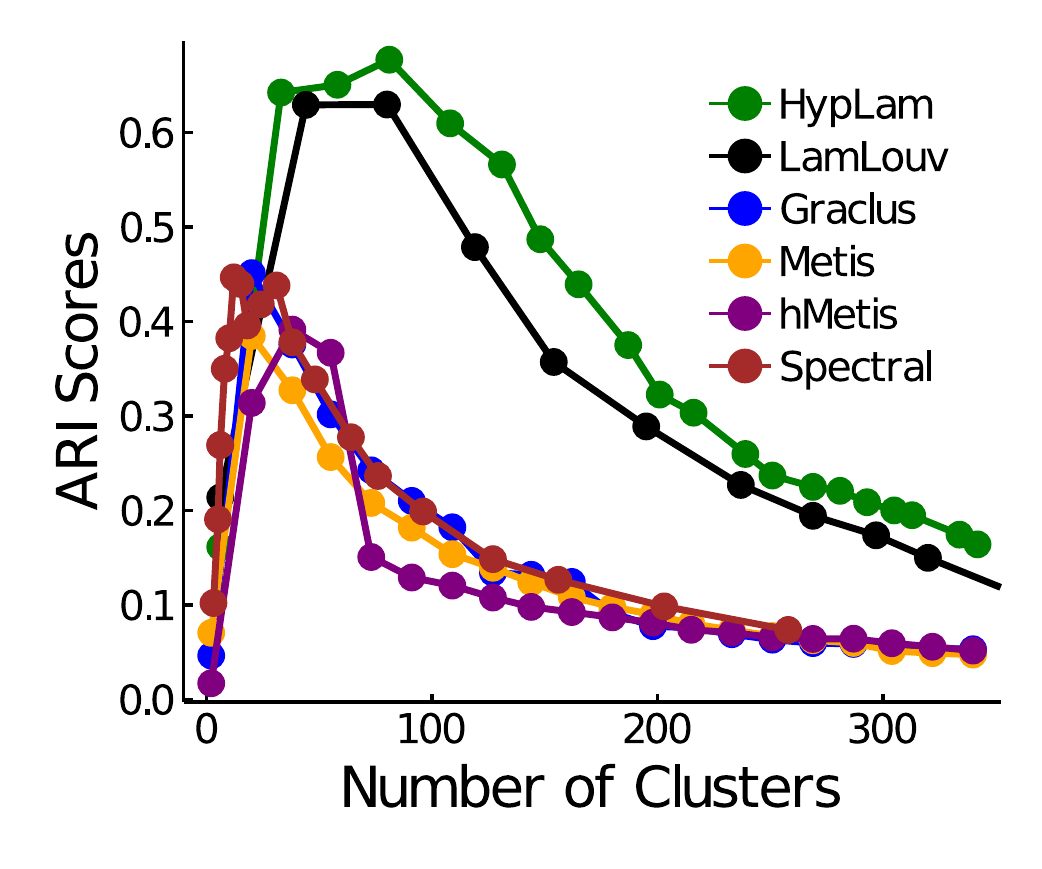}}\hfill
		\subfloat[Florida Bay, ARI Scores \label{fig:florida}] 
		{\includegraphics[width=.45\linewidth]{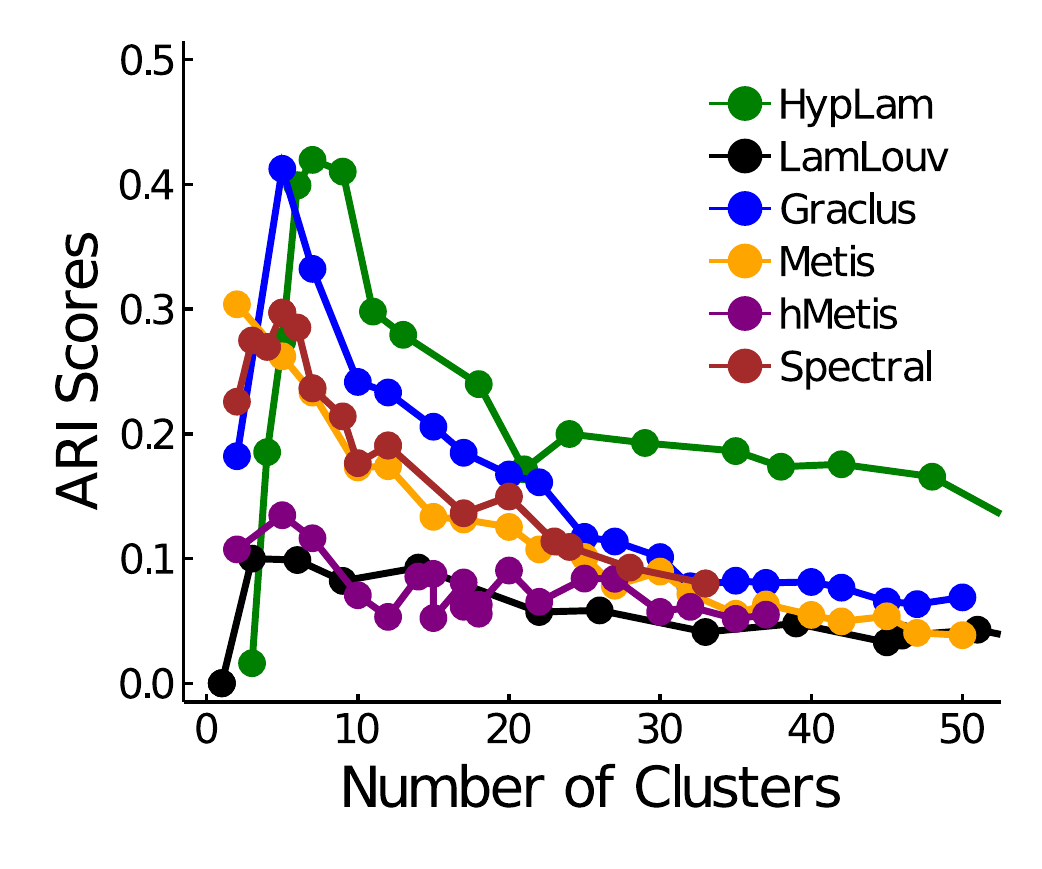}}\hfill
		\vspace{-.5\baselineskip}
		\subfloat[Email, runtime (seconds) \label{fig:emairunl}]
		{\includegraphics[width=.45\linewidth]{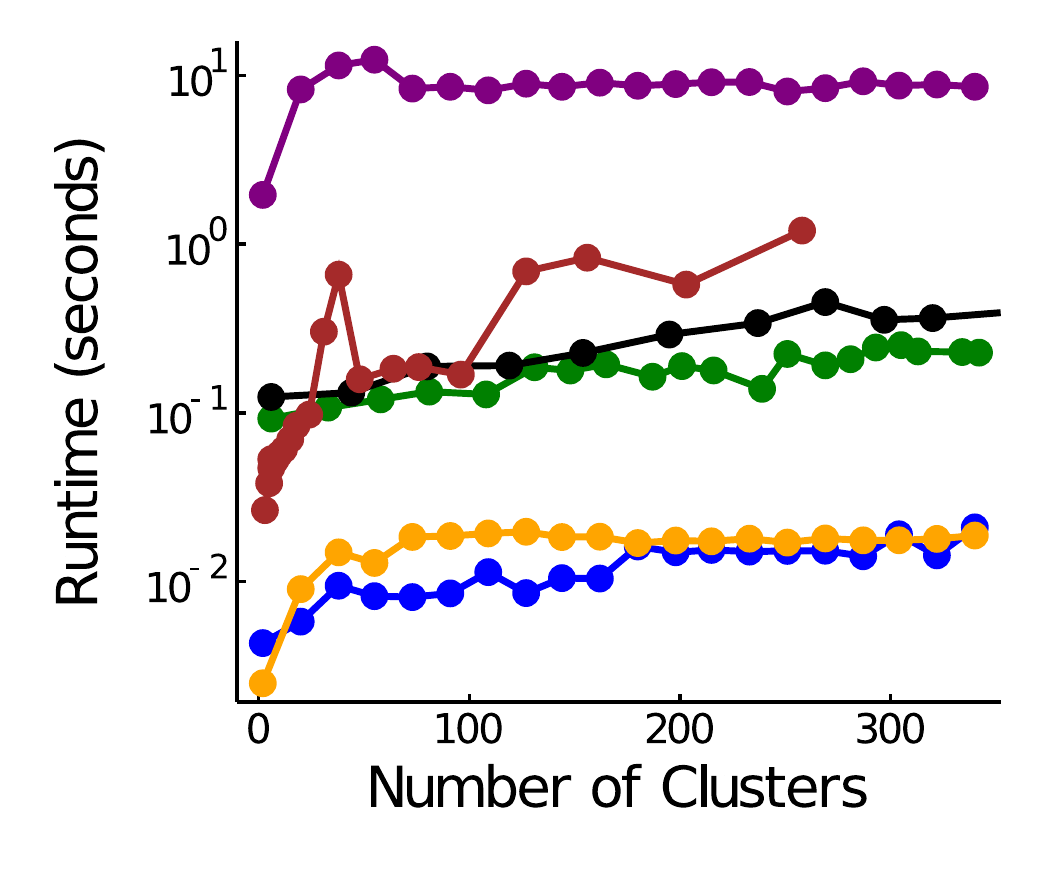}}\hfill
		\subfloat[Florida Bay, runtime (seconds) \label{fig:floridarun}] 
		{\includegraphics[width=.45\linewidth]{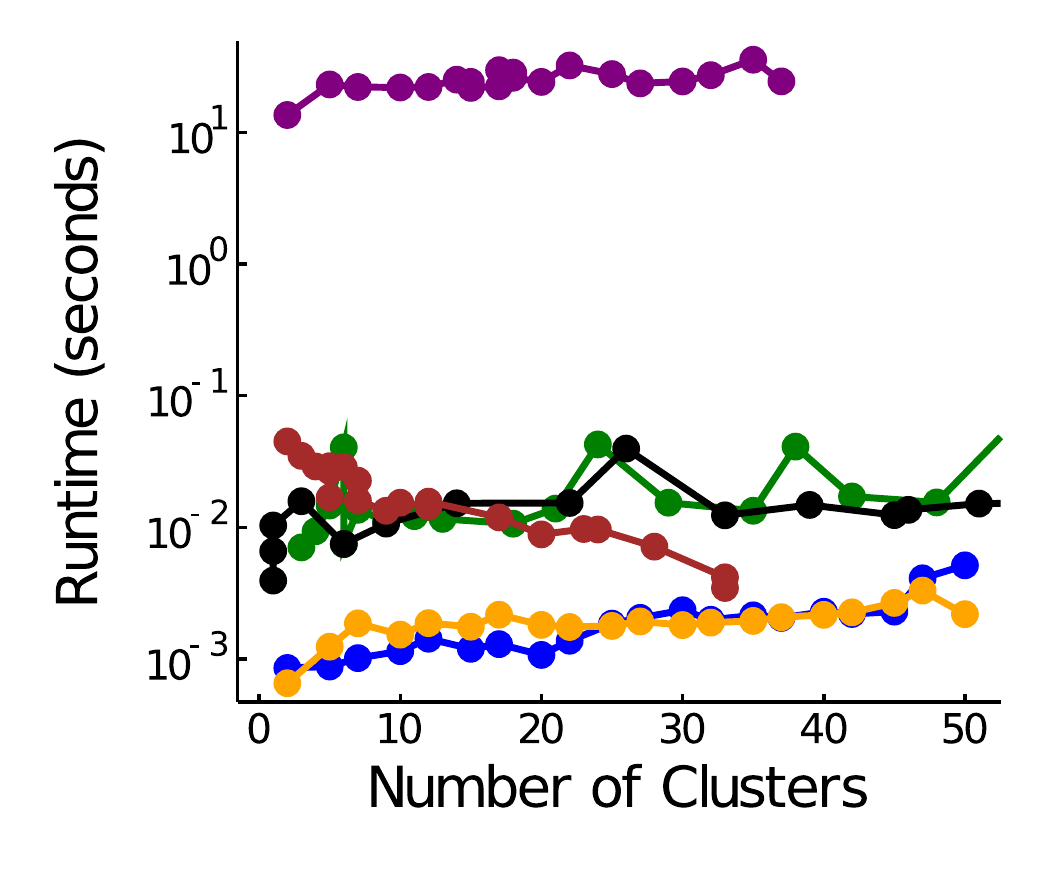}}\hfill
		\vspace{-.5\baselineskip}
		\subfloat[Email, runtime/ARI \label{fig:emailtrade}]
		{\includegraphics[width=.45\linewidth]{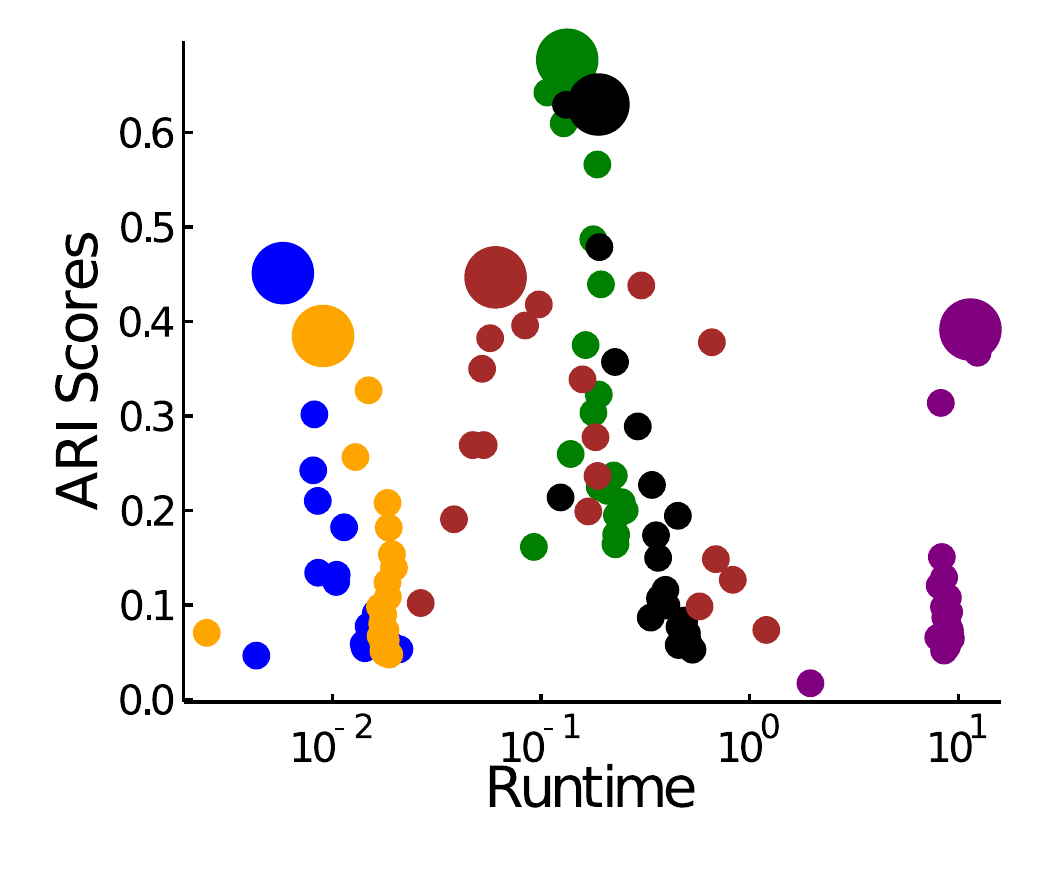}}\hfill
		\subfloat[Florida Bay, runtime/ARI\label{fig:floridatrade}]
		{\includegraphics[width=.45\linewidth]{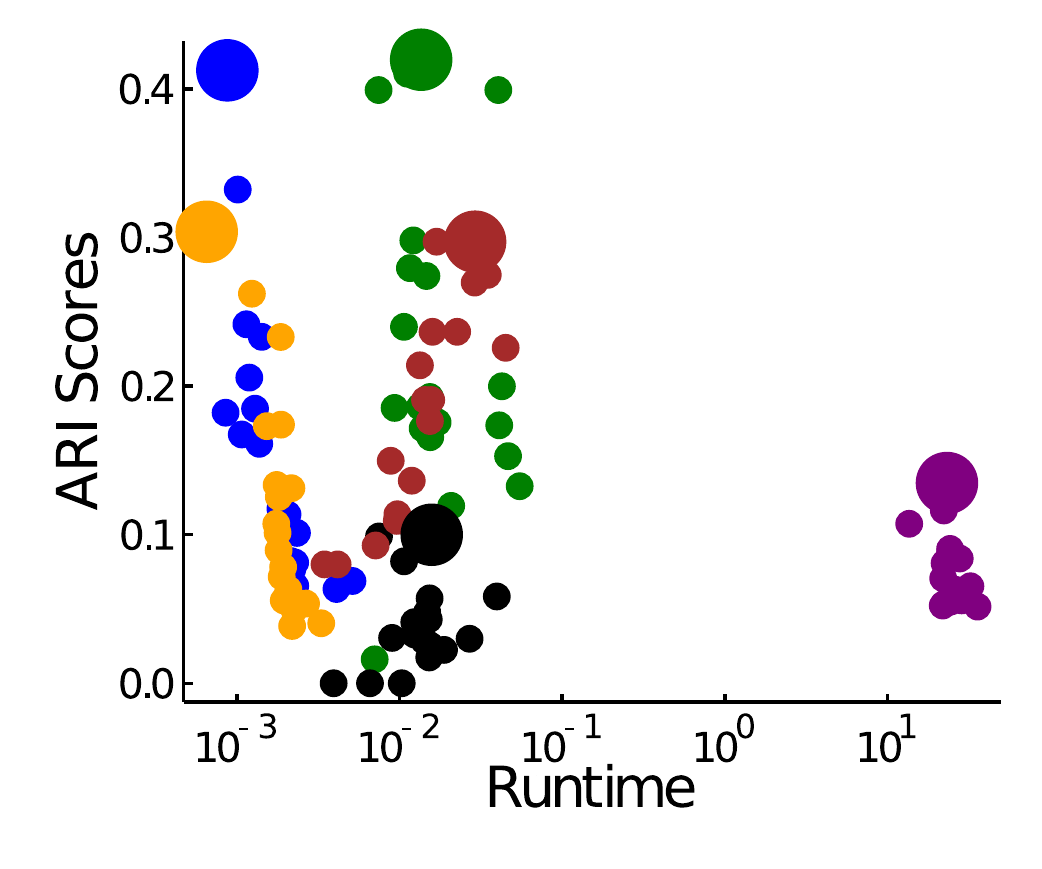}}\hfill
		\vspace{-.5\baselineskip}
		\caption{(a) \textsc{HyperLam} with triangle motifs better captures the relationship between community structure and department labels of researchers at a European research institution, across all clusters sizes. (b) Optimizing \textsc{HyperLam} using the \emph{bifan} motif and the inhomogeneous hyperedge splitting function of Li et al.~\cite{panli2017inhomogeneous}, we find clusterings with higher correlation with biological classifications of species in a food web. Figures (c) and (d) display runtimes, while (e) and (f) display the
trade-off between runtime and ARI score. Larger dots mark the best ARI score for each method.}
		\label{fig:motif-plots}
		\vspace{-\baselineskip}
	\end{figure}

We perform a similar experiment on the Florida Bay food web, in which nodes indicate species (e.g., Isopods, Eels, Meroplankton), and directed edges indicating carbon exchange~\cite{panli2017inhomogeneous,BensonGleichLeskovec2016}. Following the approach of Li and Milenkovic~\cite{panli2017inhomogeneous}, we consider the \emph{bifan} motif, in which two nodes $\{v_1, v_2\}$ have uni-directional edges to two other nodes $\{v_3, v_4\}$, and any edge combination within sets $\{v_1, v_2\}$ and $\{v_3, v_4\}$ is allowed. We identify each instance of the motif as a hyperedge. Li and Milenkovic specifically use an \emph{inhomogeneous} hyperedge cutting penalty, which can be modeled by simply adding undirected edges $(v_1, v_2)$ and $(v_3,v_4)$. Thus, we convert the input graph into a new graph, and cluster with a weighted version of Lambda-Louvain, to optimize the \textsc{HyperLam} objective. We again run hMetis on the hypergraph defined by motifs, and Lambda-Louvain on the undirected version of the original graph. We ran $\{\text{Metis},\text{Graclus}, \text{Recursive spectral} \}$ on the new graph obtained by expanding bifan motifs, as this led to better results than running them on the original graph. Figure~\ref{fig:florida} demonstrates that applying our \textsc{HyperLam} framework with the bifan motif structure leads to the highest ARI clustering scores with the biological classifications identified by Li et al.~\cite{panli2017inhomogeneous} (e.g.\ producers, fish, mammals). Acknowledging that our implementations are not optimized for speed,
Figures~\ref{fig:emailtrade} and~\ref{fig:floridatrade}
show that Metis, Graclus, and \textsc{HyperLam} methods constitute the efficient
frontier.

\subsection{Clustering Amazon Products Categories }
In our last experiment we illustrate differences that arise when applying the \textsc{HyperLam} framework with different hyperedge cut functions. In order to do
so, we apply our framework to a hypergraph constructed from Amazon review data, similar to the \emph{Fashion} and ~\emph{Appliances} hypergraphs in
the first experiment. This time, we extract nine product categories, associating each product in these categories with a node, and defining a
hyperedge to be a set of all products that are reviewed by the same person. This
results in a hypergraph with~13,156 nodes,~31,544 hyperedges, with the maximum and mean hyperedge sizes being~219 and~8.1, respectively. Each node is associated with exactly one category label.

As outlined in Section~\ref{sec:algs}, we apply a weighted clique expansion and a star expansion to the Amazon review hypergraph, each modeling a different cut penalty.
%Recall that the star expansion models the linear hyperedge cut, while the weighted clique expansion penalizes each pair of nodes from a hyperedge that are separated.
We scale the graphs so that they share the same total volume, then cluster them both with Lambda-Louvain, using various values of~$\lambda$. Running Lambda-Louvain on the clique expansion took just over two minutes on average, while runtimes were just over four minutes on average for the star expansion. 

The hypergraph has a single large connected component, indicating that reviewers do review products across different categories. At the same time,~95\% of all hyperedges in the hypergraph are completely contained inside one of the sets of nodes defining a product category. Thus, we expect that clustering the hypergraph based on hyperedge structure will yield clusters that correlate highly with product categories. We confirm this by computing ARI scores between category labels and the clusterings returned by optimizing \textsc{HyperLam} for both graph expansions (Figure~\ref{fig:amazon}). 
%We find that the star expansion in particular detects clusterings with an ARI score of nearly~0.6 in the best case.

In order to better understand the structure of clusters formed by our methods, and their relationship with product categories, we measure how well each clustering detects individual product-category node sets in the hypergraph. For each category (e.g., ``Appliances''), we measure how well a \textsc{HyperLam} clustering ``tracks'' that category by taking the best~F1 score between any of the \textsc{HyperLam} clusters and the product-category node set in question. For example, if one of the clusters returned by \textsc{HyperLam} exactly matches the ``Appliances'' node set, then we have perfectly ``tracked'' this category, and we report an F1 score of 1.
%We do this for both expansion techniques; in Table~\ref{} we report F1 scores for clusterings obtained at a fixed value of $\lambda$. 
%We see that both methods do a very good job tracking certain clusters within the hypergraph, though not all of them. 
Figure~\ref{fig:clusters} illustrates that in general, the star expansion is able to better track the two largest categories, ``Prime Pantry'' and
``Industrial \& Scientific'', each of which has roughly~5000 nodes. This helps explain why the star expansion obtains higher ARI scores in general.
On the other hand, we observed that the clique expansion tracks the ``Software'' category (802 nodes) better. This highlights the fact that different
hyperedge cut functions can leads to substantially different types of clusters. 
%\tony{Omit?: In the future, we wish to
%better understand what features of these expansions enable each approach to detect different types of clusters in
%practice.}
 \begin{figure}[t]
	\centering
	\subfloat[ARI scores  \label{fig:aris}]
	{\includegraphics[width=.47\linewidth]{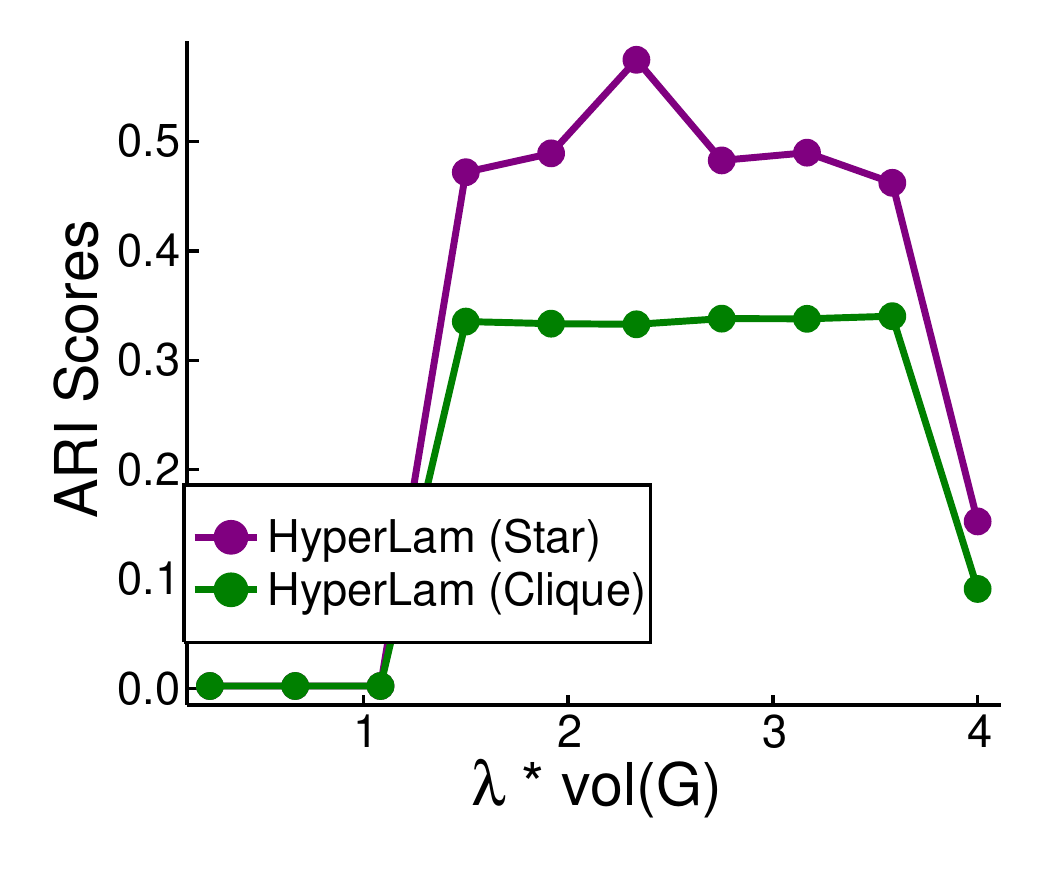}}\hfill
	\subfloat[Tracking categories \label{fig:clusters}] 
	{\includegraphics[width=.47\linewidth]{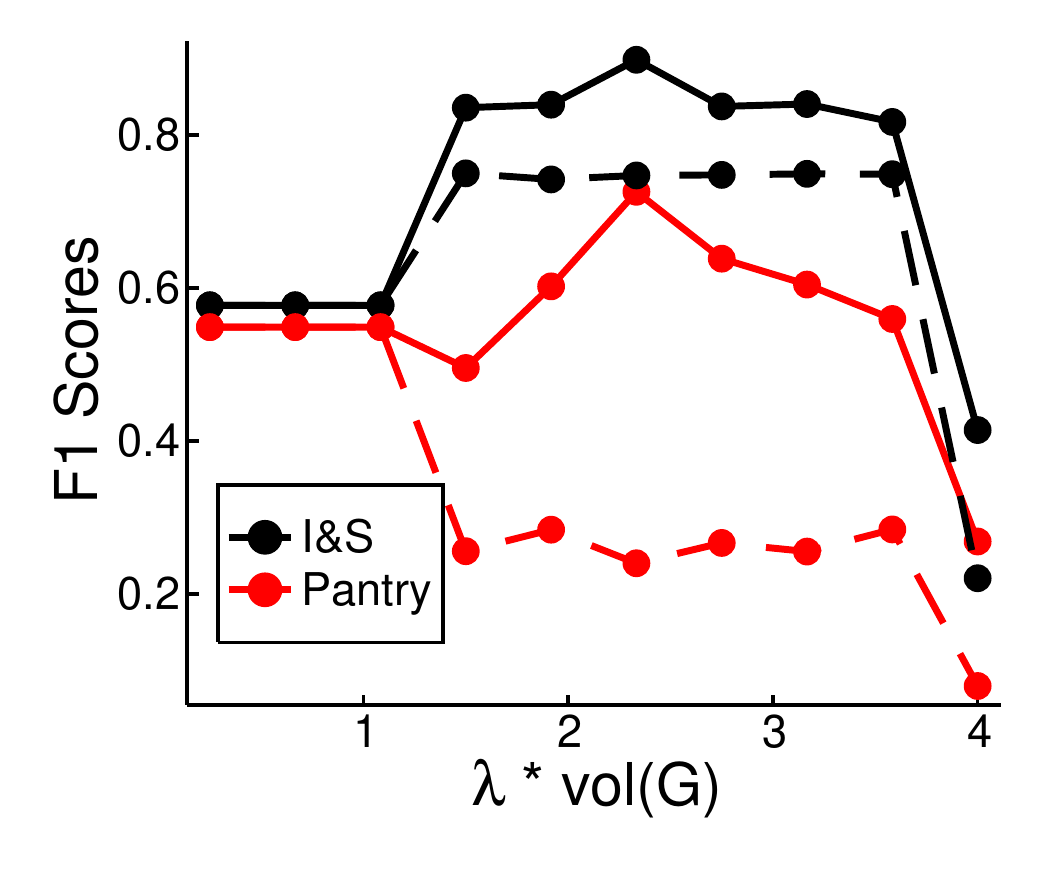}}
	\vspace{-.5\baselineskip}
	\caption{(a) The clique and star expansion lead to clusterings that are correlated with product categories in an Amazon product hypergraph. (b) We compute the best F1 score between clusters formed by \textsc{HyperLam}, and individual product category clusters. The star expansion (results with solid lines) is able to better track the two largest clusters, ``Industrial and Scientific" (black) and ``Prime Pantry'' (red), compared to the clique expansion (dashed lines). 
		%This explains the higher ARI scores for \textsc{HyperLam} with star expansion.
	}
	\label{fig:amazon}
	\vspace{-\baselineskip}
\end{figure}

\section{Discussion}
We have presented a new, flexible, and general framework for parametric clustering of hypergraph and bipartite graph datasets. This framework has deep connections to existing objective functions in the literature and there exist polynomial time approximation results as well as heuristic algorithms. While such frameworks are extremely useful to expert practitioners to engineer and investigate datasets, they are often challenging for less sophisticated users who have a tendency to rely on default parameters. Towards that end, there is a general need for statistical and automated techniques to help guide users to the most successful use of these methods, which is something we hope to design in the future.

Another challenge with the methods involves scaling of the parameters. In our experiments, we often scale these by the volume of the graph (the total sum of edge-weighted degrees) as that has proven to be successful in practice. However, it is unclear if this is the best approach in all circumstances, or whether there are situations in which absolute values of the parameters should be preferred. Finally, as our experiments highlight, there are distinct phase transitions in the behavior among these different regimes; finding ways to identify these characteristic regions would also make these parametric objectives useful to automatically find characteristically different clusterings.

\begin{acks}
This research was supported by NSF IIS-1546488, CCF-1909528, NSF Center for Science of Information STC, CCF-0939370, DOE DESC0014543, NASA, the Sloan Foundation, and the Melbourne School of Engineering.
\end{acks}

\bibliographystyle{plain}
\bibliography{param-cc}

\begin{thebibliography}{10}

\bibitem{Agarwal2005beyond}
Sameer Agarwal, Jongwoo Lim, Lihi Zelnik-Manor, Pietro Perona, David Kriegman,
  and Serge Belongie.
\newblock Beyond pairwise clustering.
\newblock CVPR '05, 2005.

\bibitem{Ailon2011bcc}
Nir. Ailon, Noa. Avigdor-Elgrabli, Edo. Liberty, and Anke. van Zuylen.
\newblock Improved approximation algorithms for bipartite correlation
  clustering.
\newblock {\em SIAM Journal on Computing}, 41(5):1110--1121, 2012.

\bibitem{AilonCharikarNewman2008}
Nir Ailon, Moses Charikar, and Alantha Newman.
\newblock Aggregating inconsistent information: ranking and clustering.
\newblock {\em Journal of the ACM (JACM)}, 55(5):23, 2008.

\bibitem{amburg2019hypergraph}
Ilya Amburg, Nate Veldt, and Austin~R Benson.
\newblock Clustering in graphs and hypergraphs with categorical edge labels.
\newblock WWW '20.

\bibitem{amit2004bicluster}
Noga Amit.
\newblock The bicluster graph editing problem.
\newblock Master's thesis, Tel Aviv University, 2004.

\bibitem{Arenas_2008}
A~Arenas, A~Fern{\'{a}}ndez, S~Fortunato, and S~G{\'{o}}mez.
\newblock Motif-based communities in complex networks.
\newblock {\em Journal of Physics A: Mathematical and Theoretical}, 41(22),
  2008.

\bibitem{Arenas2008analysis}
A~Arenas, A~Fern{\'{a}}ndez, and S~G{\'{o}}mez.
\newblock Analysis of the structure of complex networks at different resolution
  levels.
\newblock {\em New Journal of Physics}, 10(5), 2008.

\bibitem{AsterisKryillidisPapailiopoulosEtAl2016}
M.~Asteris, A.~Kyrillidis, D.~Papailiopoulos, and A.~Dimakis.
\newblock Bipartite correlation clustering: Maximizing agreements.
\newblock AISTATS '16.

\bibitem{BansalBlumChawla2004}
Nikhil Bansal, Avrim Blum, and Shuchi Chawla.
\newblock Correlation clustering.
\newblock {\em Machine Learning}, 56:89--113, 2004.

\bibitem{BensonGleichLeskovec2016}
Austin~R. Benson, David~F. Gleich, and Jure Leskovec.
\newblock Higher-order organization of complex networks.
\newblock {\em Science}, 353(6295):163--166, 2016.

\bibitem{Blondel_2008}
Vincent~D Blondel, Jean-Loup Guillaume, Renaud Lambiotte, and Etienne Lefebvre.
\newblock Fast unfolding of communities in large networks.
\newblock {\em Journal of Statistical Mechanics: Theory and Experiment},
  2008(10):P10008, 2008.

\bibitem{brickell2008metricnearness}
Justin Brickell, Inderjit~S. Dhillon, Suvrit Sra, and Joel~A. Tropp.
\newblock The metric nearness problem.
\newblock {\em SIAM Journal on Matrix Analysis and Applications},
  30(1):375--396, 2008.

\bibitem{Catalyurek99hypergraph}
{\"{U}}mit~V. {\c{C}}ataly{\"{u}}rek and Cevdet Aykanat.
\newblock Hypergraph-partitioning based decomposition for parallel
  sparse-matrix vector multiplication.
\newblock {\em IEEE Transactions on Parallel and Distributed Systems},
  10(7):673--693, 1999.

\bibitem{CharikarGuruswamiWirth2005}
Moses Charikar, Venkatesan Guruswami, and Anthony Wirth.
\newblock Clustering with qualitative information.
\newblock {\em Journal of Computer and System Sciences}, 71(3):360 -- 383,
  2005.
\newblock Learning Theory 2003.

\bibitem{ChawlaMakarychevSchrammEtAl2015}
Shuchi Chawla, Konstantin Makarychev, Tselil Schramm, and Grigory Yaroslavtsev.
\newblock Near optimal lp rounding algorithm for correlation clustering on
  complete and complete k-partite graphs.
\newblock STOC '15. ACM, 2015.

\bibitem{Delvenne2010stabilitypnas}
J.-C. Delvenne, S.~N. Yaliraki, and M.~Barahona.
\newblock Stability of graph communities across time scales.
\newblock {\em Proceedings of the National Academy of Sciences},
  107(29):12755--12760, 2010.

\bibitem{DemaineEmanuelFiatEtAl2006}
Erik~D. Demaine, Dotan Emanuel, Amos Fiat, and Nicole Immorlica.
\newblock Correlation clustering in general weighted graphs.
\newblock {\em Theoretical Computer Science}, 361(2):172 -- 187, 2006.
\newblock Approximation and Online Algorithms.

\bibitem{Dhillon-2007-graclus}
Inderjit~S. Dhillon, Yuqiang Guan, and Brian Kulis.
\newblock Weighted graph cuts without eigenvectors a multilevel approach.
\newblock {\em {IEEE} Transactions on Pattern Analysis and Machine
  Intelligence}, 29(11):1944--1957, 2007.

\bibitem{Fortunato36}
Santo Fortunato and Marc Barth{\'e}lemy.
\newblock Resolution limit in community detection.
\newblock {\em Proceedings of the National Academy of Sciences}, 104(1):36--41,
  2007.

\bibitem{fukunga2018highcc}
Takuro Fukunaga.
\newblock Lp-based pivoting algorithm for higher-order correlation clustering.
\newblock In {\em Computing and Combinatorics}, 2018.

\bibitem{Veldt2018ccgen}
David~F. Gleich, Nate Veldt, and Anthony Wirth.
\newblock {Correlation Clustering Generalized}.
\newblock ISAAC 2018, 2018.

\bibitem{gong_km1}
J.~{Gong} and {Sung Kyu Lim}.
\newblock Multiway partitioning with pairwise movement.
\newblock ICAD '98, 1998.

\bibitem{hadley1992efficient}
S.~W. {Hadley}, B.~L. {Mark}, and A.~{Vannelli}.
\newblock An efficient eigenvector approach for finding netlist partitions.
\newblock {\em IEEE Transactions on Computer-Aided Design of Integrated
  Circuits and Systems}, 11(7), 1992.

\bibitem{Hein2013}
Matthias Hein, Simon Setzer, Leonardo Jost, and Syama~Sundar Rangapuram.
\newblock The total variation on hypergraphs - learning on hypergraphs
  revisited.
\newblock NIPS'13, 2013.

\bibitem{ihler1993modeling}
Edmund Ihler, Dorothea Wagner, and Frank Wagner.
\newblock Modeling hypergraphs by graphs with the same mincut properties.
\newblock {\em Inf. Process. Lett.}, 45(4), 1993.

\bibitem{JeubGenLouvain}
Lucas G.~S. Jeub, Marya Bazzi, Inderjit~S. Jutla, and Peter~J. Mucha.
\newblock A generalized louvain method for community detection implemented in
  {MATLAB}, 2011-2017.

\bibitem{kaminski2019clustering}
Bogumi{\l} Kami{\'n}ski, Val{\'e}rie Poulin, Pawe{\l} Pra{\l}at, Przemys{\l}aw
  Szufel, and Fran{\c{c}}ois Th{\'e}berge.
\newblock Clustering via hypergraph modularity.
\newblock {\em PloS one}, 14(11), 2019.

\bibitem{karypis1998metis}
George Karypis and Vipin Kumar.
\newblock A fast and high quality multilevel scheme for partitioning irregular
  graphs.
\newblock {\em SIAM J. Sci. Comput.}, 20(1):359–392, 1998.

\bibitem{Karypis:1999:MKW:309847.309954}
George Karypis and Vipin Kumar.
\newblock Multilevel k-way hypergraph partitioning.
\newblock DAC '99, pages 343--348. ACM, 1999.

\bibitem{kim2011highcc}
Sungwoong Kim, Sebastian Nowozin, Pushmeet Kohli, and Chang~D. Yoo.
\newblock Higher-order correlation clustering for image segmentation.
\newblock NIPS '11, 2011.

\bibitem{klymko2014using}
Christine Klymko, David~F. Gleich, and Tamara~G. Kolda.
\newblock Using triangles to improve community detection in directed networks.
\newblock In {\em The Second ASE International Conference on Big Data Science
  and Computing, BigDataScience}, 2014.

\bibitem{kumar2020new}
Tarun Kumar, Sankaran Vaidyanathan, Harini Ananthapadmanabhan, Srinivasan
  Parthasarathy, and Balaraman Ravindran.
\newblock A new measure of modularity in hypergraphs: Theoretical insights
  and implications for effective clustering.
\newblock In {\em Complex Networks and Their Applications VIII}. Springer
  International Publishing, 2020.

\bibitem{leskovec2007graph}
Jure Leskovec, Jon Kleinberg, and Christos Faloutsos.
\newblock Graph evolution: Densification and shrinking diameters.
\newblock {\em ACM Transactions on Knowledge Discovery from Data (TKDD)},
  1(1):2, 2007.

\bibitem{Li2017motifcc}
Pan Li, H.~Dau, Gregory~J. Puleo, and Olgica Milenkovic.
\newblock Motif clustering and overlapping clustering for social network
  analysis.
\newblock INFOCOM '17, pages 1--9, 2017.

\bibitem{panli2017inhomogeneous}
Pan Li and Olgica Milenkovic.
\newblock Inhomogeneous hypergraph clustering with applications.
\newblock NIPS '17, pages 2308--2318, 2017.

\bibitem{panli_submodular}
Pan Li and Olgica Milenkovic.
\newblock Submodular hypergraphs: p-laplacians, cheeger inequalities and
  spectral clustering.
\newblock ICML '18, pages 3020--3029, 2018.

\bibitem{Li2019motif}
Pan {Li}, Gregory.~J. {Puleo}, and Olgica. {Milenkovic}.
\newblock Motif and hypergraph correlation clustering.
\newblock {\em IEEE Transactions on Information Theory}, pages 1--1, 2019.

\bibitem{michoel2012molecular}
Tom Michoel and Bruno Nachtergaele.
\newblock Alignment and integration of complex networks by hypergraph-based
  spectral clustering.
\newblock {\em Physical Review E}, 86:056111, 2012.

\bibitem{newman2004modularity}
Mark~EJ Newman and Michelle Girvan.
\newblock Finding and evaluating community structure in networks.
\newblock {\em Physical review E}, 69(026113), 2004.

\bibitem{ni-etal-2019-justifying}
Jianmo Ni, Jiacheng Li, and Julian McAuley.
\newblock Justifying recommendations using distantly-labeled reviews and
  fine-grained aspects.
\newblock EMNLP-IJCNLP '19, pages 188--197, November 2019.

\bibitem{Peel2017ground}
Leto Peel, Daniel~B. Larremore, and Aaron Clauset.
\newblock The ground truth about metadata and community detection in networks.
\newblock {\em Science Advances}, 3(5), 2017.

\bibitem{Puleo2018correlation}
Gregory.~J. {Puleo} and Olgica. {Milenkovic}.
\newblock Correlation clustering and biclustering with locally bounded errors.
\newblock {\em IEEE Transactions on Information Theory}, 64(6):4105--4119, June
  2018.

\bibitem{ReichardtBornholdt2004}
J\"org Reichardt and Stefan Bornholdt.
\newblock Detecting fuzzy community structures in complex networks with a potts
  model.
\newblock {\em Phys. Rev. Lett.}, 93:218701, 2004.

\bibitem{ruggles2020full}
Cameron Ruggles, Nate Veldt, and David~F. Gleich.
\newblock A parallel projection method for metric constrained optimization.
\newblock SIAM CSC '20.

\bibitem{schaeffer2007graphclustering}
Satu~Elisa Schaeffer.
\newblock Graph clustering.
\newblock {\em Computer Science Review}, 2007.

\bibitem{ShamirSharanTsur2004}
Ron Shamir, Roded Sharan, and Dekel Tsur.
\newblock Cluster graph modification problems.
\newblock {\em Discrete Applied Mathematics}, 144:173--182, 2004.

\bibitem{ShiMalik2000}
Jianbo Shi and J.~Malik.
\newblock Normalized cuts and image segmentation.
\newblock {\em IEEE Transactions on Pattern Analysis and Machine Intelligence},
  22(8):888--905, 2000.

\bibitem{sonthalia2020project}
Rishi Sonthalia and Anna~C. Gilbert.
\newblock Project and forget: Solving large-scale metric constrained problems,
  2020.

\bibitem{tian2009gene}
Ze~Tian, TaeHyun Hwang, and Rui Kuang.
\newblock {A hypergraph-based learning algorithm for classifying gene
  expression and arrayCGH data with prior knowledge}.
\newblock {\em Bioinformatics}, 25(21):2831--2838, 2009.

\bibitem{traag2011narrow}
V.~A. Traag, P.~Van~Dooren, and Y.~Nesterov.
\newblock Narrow scope for resolution-limit-free community detection.
\newblock {\em Phys. Rev. E}, 84:016114, Jul 2011.

\bibitem{Tsourakakis:2017:SMG:3038912.3052653}
Charalampos~E. Tsourakakis, Jakub Pachocki, and Michael Mitzenmacher.
\newblock Scalable motif-aware graph clustering.
\newblock WWW '17, pages 1451--1460, 2017.

\bibitem{vanzuylen2009deterministic}
Anke van Zuylen and David~P. Williamson.
\newblock Deterministic pivoting algorithms for constrained ranking and
  clustering problems.
\newblock {\em Mathematics of Operations Research}, 34(3):594--620, 2009.

\bibitem{veldt2020hypergraph}
Nate Veldt, Austin~R. Benson, and Jon Kleinberg.
\newblock Hypergraph cuts with general splitting functions, 2020.

\bibitem{Veldt:2018:CCF:3178876.3186110}
Nate Veldt, David~F. Gleich, and Anthony Wirth.
\newblock A correlation clustering framework for community detection.
\newblock WWW '18, pages 439--448, 2018.

\bibitem{Veldt2019learning}
Nate Veldt, David~F. Gleich, and Anthony Wirth.
\newblock Learning resolution parameters for graph clustering.
\newblock WWW '19, 2019.

\bibitem{veldt2019simods}
Nate Veldt, David~F. Gleich, Anthony Wirth, and James Saunderson.
\newblock Metric-constrained optimization for graph clustering algorithms.
\newblock {\em SIAM Journal on Mathematics of Data Science}, 1(2):333--355,
  2019.

\bibitem{hao2018higher}
Hao Yin, Austin~R. Benson, and Jure Leskovec.
\newblock Higher-order clustering in networks.
\newblock {\em Phys. Rev. E}, 97:052306, 2018.

\bibitem{yin2017local}
Hao Yin, Austin~R Benson, Jure Leskovec, and David~F Gleich.
\newblock Local higher-order graph clustering.
\newblock KDD '17, pages 555--564, 2017.

\bibitem{Zhou2006learning}
Dengyong Zhou, Jiayuan Huang, and Bernhard Sch\"{o}lkopf.
\newblock Learning with hypergraphs: Clustering, classification, and embedding.
\newblock NIPS '06, 2006.

\bibitem{zien1999}
J.~Y. {Zien}, M.~D.~F. {Schlag}, and P.~K. {Chan}.
\newblock Multilevel spectral hypergraph partitioning with arbitrary vertex
  sizes.
\newblock {\em IEEE Transactions on Computer-Aided Design of Integrated
  Circuits and Systems}, 18(9):1389--1399, 1999.

\end{thebibliography}
\appendix
\section{Proofs for Bounds in Table~\ref{tab:pieces}.}
\label{sec:appsystem}
In order to prove approximation guarantees for PBCC, we must determine the best way to satisfy the first and second condition in Theorem~\ref{thm:3pt1}. Details for satisfying the first condition are given in the main text. For the second condition, we consider each triangle from Figure~\ref{fig:trianglecases} in turn, each with their own accompanying figure.
In each case, we state the left hand side $L =  w_{ij}^+ + w_{jk}^+ + w_{ik}^- $ in terms of $\mu$ and $\beta$, and then bound $c_{ij}^{} + c_{jk}^{} + c_{ik}^{}$ below by some linear function $f(\delta)$. We know then that for each case we must satisfy
\[ \frac{f(\delta)}{L} \geq \frac{1}{\alpha}\,. \]
Table~\ref{tab:pieces} in the main text summarizes the bounds we compute here. Recall throughout that $x_{ij}^{} < \delta$, $x_{jk}^{} < \delta$,  and $x_{ik}^{} \geq \delta$.

%In some cases we will show different options for lower bounds, which may differ in tightness depending on the values of $\mu$ and $\beta$. Recall that for each pair of nodes $(u,v)$, $c_{uv} = w_{uv}^+x_{uv} + w_{uv}^-(1-x_{uv})$, where $(w_{uv}^+, w_{uv}^-)$ for PBCC are given in~\eqref{eq:pbccweights}. 

\begin{figure}[h]
	\begin{minipage}[c]{0.2\linewidth}
		\centering
		{\includegraphics[width=\linewidth]{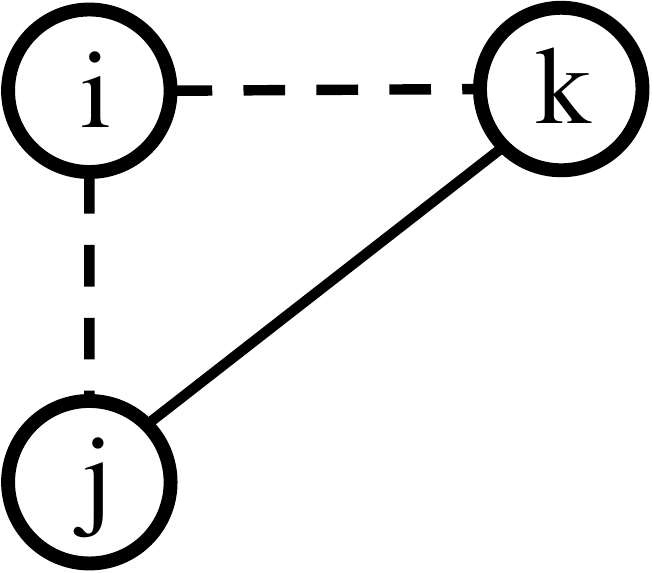}}
		\caption*{Case (1a)}\label{1a}
	\end{minipage}
	\hfill %\hspace{.5cm}
	\begin{minipage}[c]{0.75\linewidth}
		$L =  w_{ij}^+ + w_{jk}^+ + w_{ik}^- = (1-\beta) + \beta = 1$.\newline
		
		If $\beta \geq 1/2$:
		\begin{align*}	
		& c_{ij}^{} + c_{jk}^{} + c_{ik}^{} \\
		&= \mu(1-x_{ij}) + (1-\beta)x_{jk}^{} + \beta (1-x_{ik}) \\
		&= \mu(1-x_{ij}) + (2\beta -1)(1-x_{ik}) \\& \hspace{.5cm} + (1-\beta)(x_{jk}^{} + 1- x_{ik}) \\
		& \geq \mu(1-\delta) + (2\beta-1)(1-2\delta) + (1-\beta)(1-x_{ik})\\
		& \geq \mu(1-\delta) + (2\beta-1)(1-2\delta) + (1-\beta)(1-\delta)\\
		&= \mu(1-\delta) +\beta + \delta (1-3\beta).
		\end{align*}
		For any $\beta$: $	c_{ij}^{} + c_{jk}^{} + c_{ik}\geq \mu(1-\delta) + \beta (1-\delta)$.
		
	\end{minipage}
	%\caption{Case 1a}
	
\end{figure}

\begin{figure}[h]
	\begin{minipage}[c]{0.2\linewidth}
		\centering
		{\includegraphics[width=\linewidth]{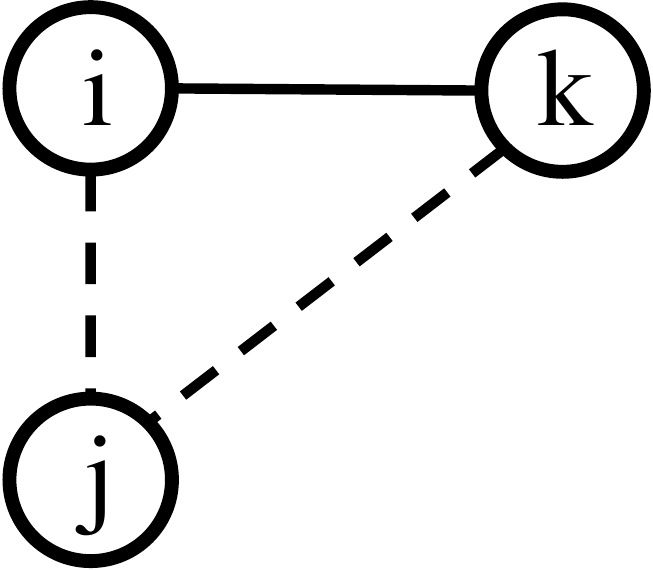}}
		\caption*{Case (1b)}	\label{1b}
	\end{minipage}
	\hfill %\hspace{.5cm}
	\begin{minipage}[c]{0.5\linewidth}
		
		$L =  w_{ij}^+ + w_{jk}^+ + w_{ik}^- = (1-\beta)$.
		
		\begin{align*}	
		c_{ij}^{} + c_{jk}^{} + c_{ik}^{} &= \mu(1-x_{ij}) + (1-\beta)(x_{jk}+x_{ik}) \\
		&\geq \mu(1-\delta) + (1-\beta)\delta.\\
		\end{align*}
		
	\end{minipage}
	
\end{figure}

\begin{figure}[h!]
	\begin{minipage}[c]{0.2\linewidth}
		\centering
		{\includegraphics[width=\linewidth]{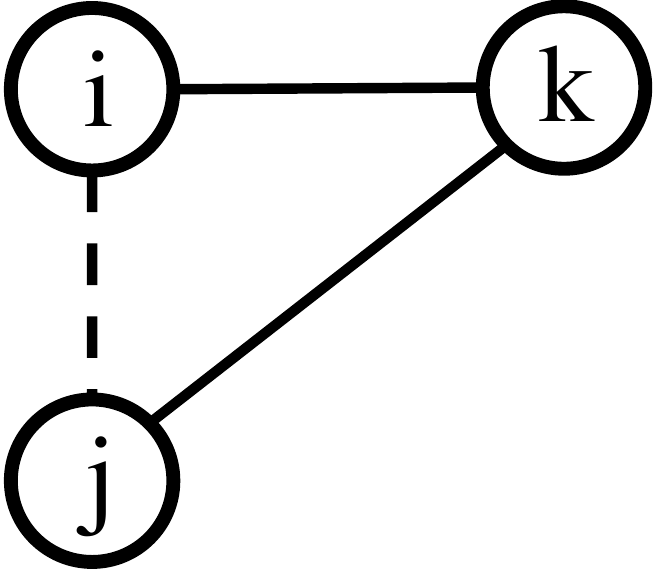}}
		\caption*{Case (1c)}	\label{1c}
	\end{minipage}
	\hfill
	\begin{minipage}[c]{0.7\linewidth}
		
		$L =  w_{ij}^+ + w_{jk}^+ + w_{ik}^- = 0$.\newline
		
		The inequality is trivial since left hand side is zero.
		
	\end{minipage}
	
\end{figure}

\begin{figure}[h!]
	\begin{minipage}[c]{0.2\linewidth}
		\centering
		{\includegraphics[width=\linewidth]{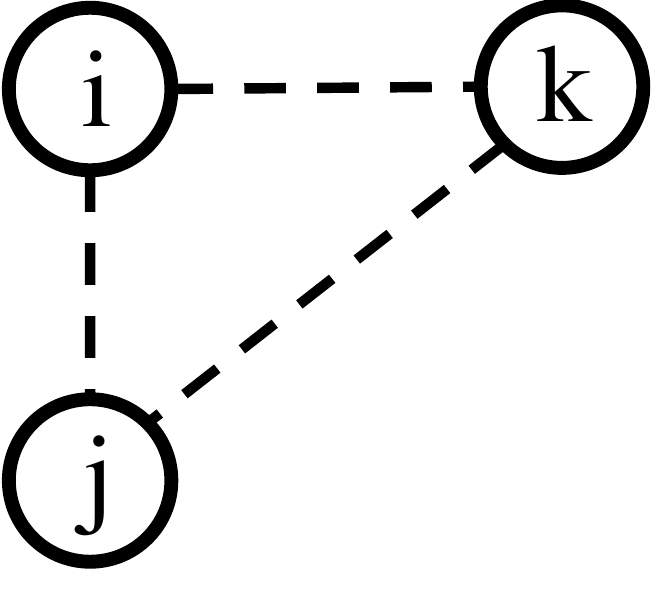}}
		\caption*{Case (1d)}	\label{1d}
	\end{minipage}
	\hfill %space{.5cm}
	\begin{minipage}[c]{0.7\linewidth}
		
		$L =  w_{ij}^+ + w_{jk}^+ + w_{ik}^- = \beta$.
		
		\begin{align*}	
		c_{ij}^{} + c_{jk}^{} + c_{ik}^{} &=  \mu(1-x_{ij}) + \beta( 1- x_{jk}^{} + 1- x_{ik}) \\
		&\geq \mu(1-\delta) + \beta( 1- \delta + 1- 2\delta)\\
		&= \mu(1-\delta) + \beta( 2- 3\delta).
		\end{align*}
		
	\end{minipage}
	
\end{figure}

\begin{figure}[h!]
	\begin{minipage}[c]{0.2\linewidth}
		\centering
		{\includegraphics[width=\linewidth]{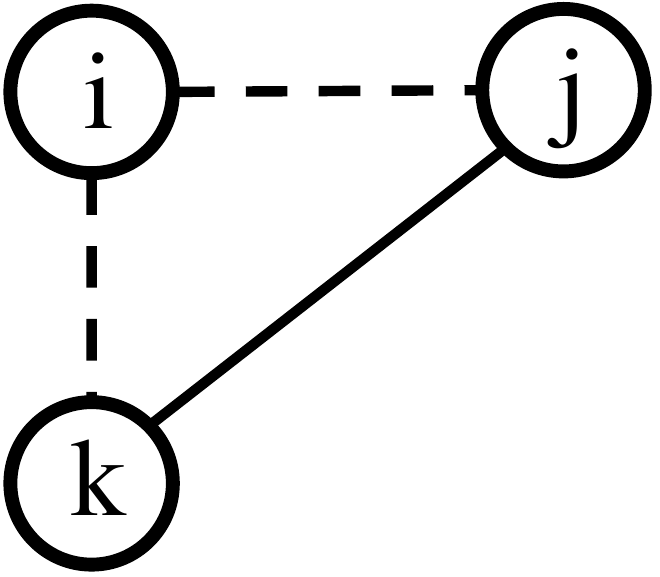}}
		\caption*{Case (2a)} 	\label{2a}
	\end{minipage}
	\hfill %space{.5cm}
	\begin{minipage}[c]{0.7\linewidth}
		
		$L =  w_{ij}^+ + w_{jk}^+ + w_{ik}^- = (1-\beta) + \mu$.
		
		\begin{align*}	
		c_{ij}^{} &+ c_{jk}^{} + c_{ik}^{} \\
		&= \beta(1-x_{ij}) + (1-\beta)x_{jk}^{} + \mu(1-x_{ik})\\
		&\geq \beta(1-\delta) + \mu (1-2\delta).
		\end{align*}
		
	\end{minipage}
	
\end{figure}

\begin{figure}[h!]
	\begin{minipage}[c]{0.2\linewidth}
		\centering
		{\includegraphics[width=\linewidth]{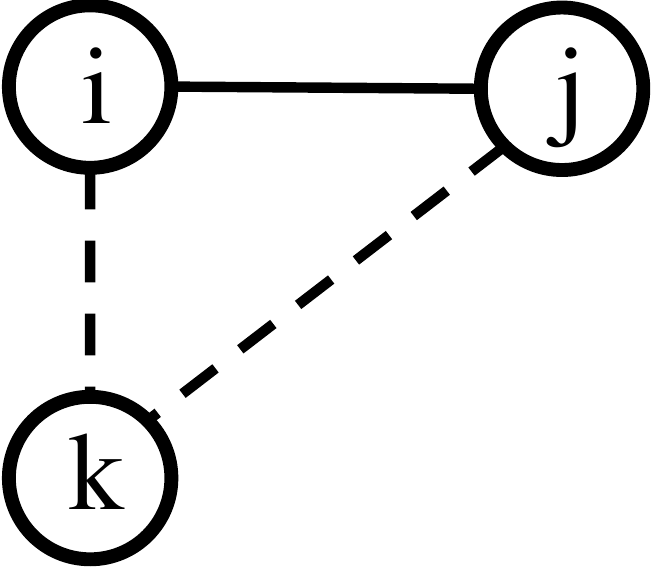}}
		\caption*{Case (2b)}	\label{2b}
	\end{minipage}
	\hfill %space{.5cm}
	\begin{minipage}[c]{0.7\linewidth}
		
		$L =  w_{ij}^+ + w_{jk}^+ + w_{ik}^- = (1-\beta) + \mu$.\newline
		
		Observe that this case is symmetric to Case (2a), since the pair $(i,j)$ shares a symmetric relationship to pair $(j,k)$ in the bad triangle.

	\end{minipage}
	
\end{figure}

\begin{figure}[h!]
	\begin{minipage}[c]{0.2\linewidth}
		\centering
		{\includegraphics[width=\linewidth]{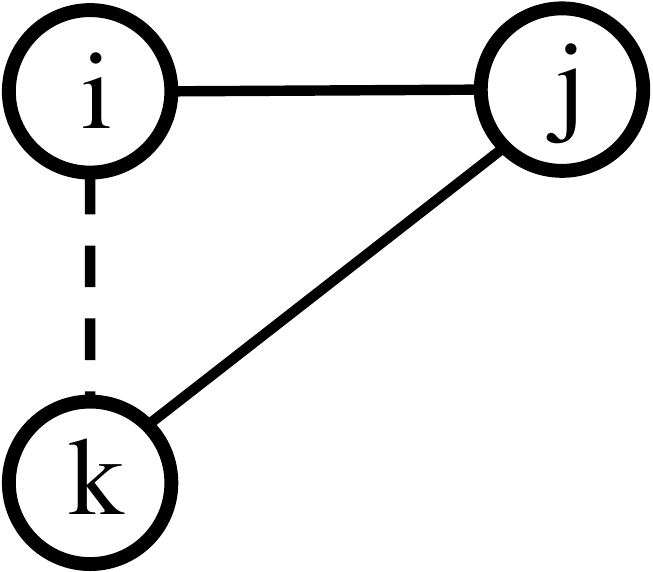}}
		\caption*{Case (2c)} 	\label{2c}
	\end{minipage}
	\hfill %space{.5cm}
	\begin{minipage}[c]{0.7\linewidth}
		
		$L =  w_{ij}^+ + w_{jk}^+ + w_{ik}^- = 2(1-\beta) + \mu$.
		
		\begin{align*}	
		c_{ij}^{} + c_{jk}^{} + c_{ik}^{} &= (1-\beta)(x_{ij}^{} + x_{jk}) + \mu(1-x_{ik})\\
		&\geq (1-\beta)x_{ik}^{} + \mu(1-x_{ik})\\
		& \geq (1-\beta)\delta + \mu(1-2\delta).
		\end{align*}
		
	\end{minipage}
	
\end{figure}

\begin{figure}[h!]
	\begin{minipage}[c]{0.2\linewidth}
		\centering
		{\includegraphics[width=\linewidth]{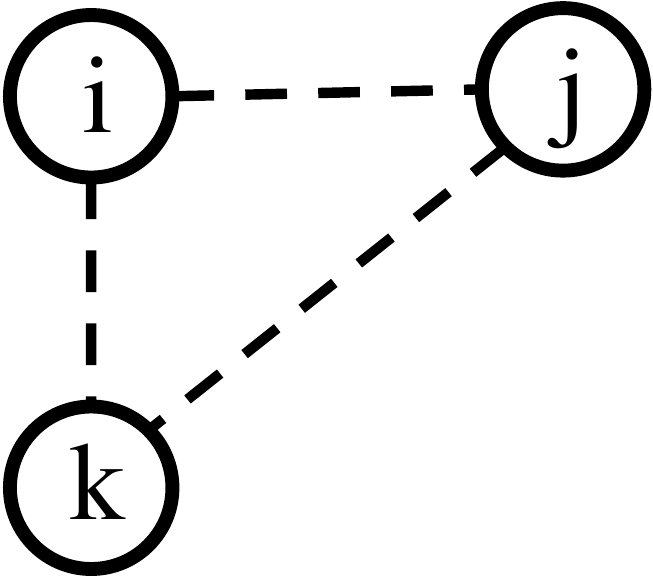}}
		\caption*{Case (2d)} 	\label{2d}
	\end{minipage}
	\hfill %space{.5cm}
	\begin{minipage}[c]{0.7\linewidth}
		
		$L =  w_{ij}^+ + w_{jk}^+ + w_{ik}^- = \mu$.
		
		\begin{align*}	
		c_{ij}^{} + c_{jk}^{} + c_{ik}^{} &= \beta(1-x_{ij}^{} + 1-x_{jk}) + \mu(1-x_{ik})\\
		&\geq \beta(2- 2\delta) +\mu(1-2\delta).
		\end{align*}
		
	\end{minipage}
	
\end{figure}

\begin{figure}[t]
	\begin{minipage}[c]{0.2\linewidth}
		\centering
		{\includegraphics[width=\linewidth]{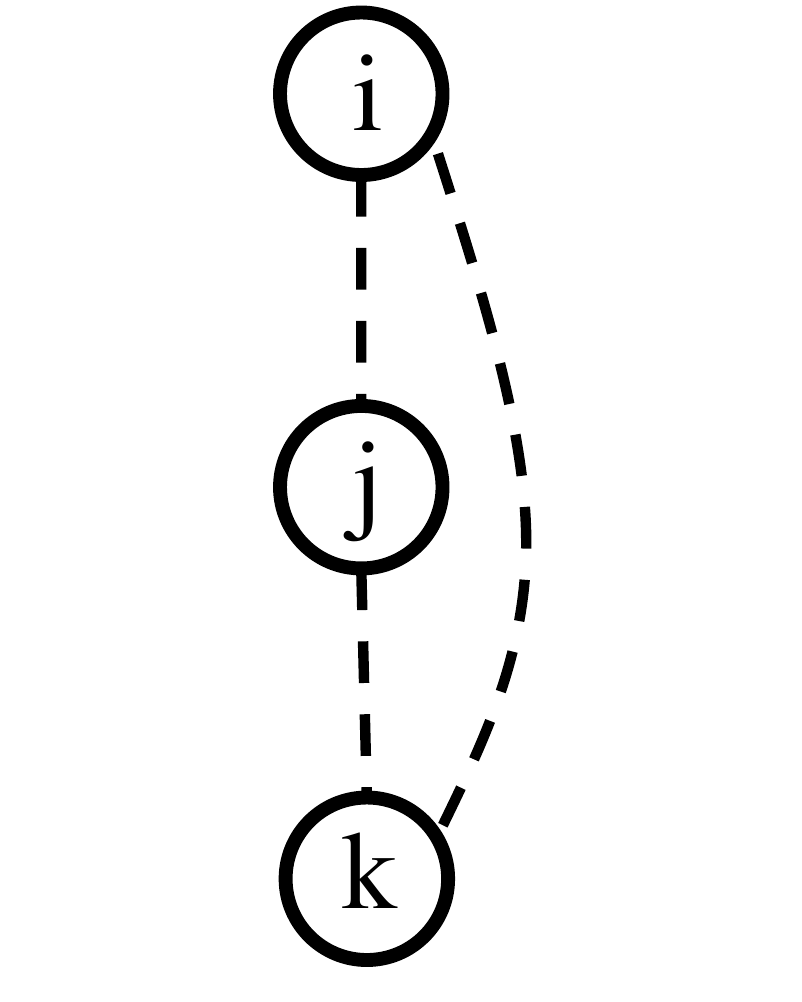}}
		\caption*{Case (3)}	\label{case3}
	\end{minipage}
	\hfill %space{.5cm}
	\begin{minipage}[c]{0.7\linewidth}
		
		$L =  w_{ij}^+ + w_{jk}^+ + w_{ik}^- = \mu$.
		
		\begin{align*}	
		c_{ij}^{} + c_{jk}^{} + c_{ik}^{} &= \mu(1-x_{ij}^{} + 1-x_{jk}^{} + 1-x_{ik}) \\
		&\geq \mu(3- 4\delta).
		\end{align*}
		
	\end{minipage}
	
\end{figure}

\end{document}